\providecommand{\tabularnewline}{\\}
\providecommand{\tabularnewline}{\\}
\def\OR{\mathop{\rm OR}}
\newtheorem{statement}{Lemma}
\newtheorem{theorem}{Theorem}
\newtheorem{consequence}[theorem]{Consequence}
\newtheorem{example}{Example}
\def\rank{\mathop{\rm rank}}
\def\mod{\mathop{\rm mod}}
\def\wgt{\mathop{\rm wgt}}
\def\deg{\mathop{\rm deg}}
\def\Hn{\mathcal{H}_2^{\otimes n}}
\def\mat#1{\mathcal{#1}}
\newcommand{\ket}[1]{\left\vert{#1}\right\rangle}
\begin{document}

\author{Alexey A. Kovalev}

\affiliation{Department of Physics \& Astronomy, University of California, Riverside,
California 92521, USA}

\author{Leonid P. Pryadko}

\affiliation{Department of Physics \& Astronomy, University of California, Riverside,
California 92521, USA}

\title{Quantum ``hyperbicycle'' low-density parity check codes with finite
rate}
\begin{abstract}
  We introduce a ``hyperbicycle'' ansatz for quantum codes which gives
  the hypergraph-product (generalized toric) codes by Tillich and
  Z\'emor and generalized bicycle codes by MacKay et al.\ as limiting
  cases.  The construction allows for both the lower and the upper
  bounds on the minimum distance; they scale as a square root of the
  block length.  Many of thus defined codes have finite rate and a
  limited-weight stabilizer generators, an analog of classical
  low-density parity check (LDPC) codes.  Compared to the
  hypergraph-product codes, hyperbicycle codes generally have wider
  range of parameters; in particular, they can have higher rate while
  preserving the (estimated) error threshold.
\end{abstract}
\maketitle

\section{Introduction}

Quantum computing can become a reality only when combined with quantum
error correction as quantum information is intrinsically fragile due
to inevitable coupling to the environment
\cite{chuang-2000,chuang-2001,%
  Chiaverini-2004,Martinis-review-2009}.  Quantum error correcting
codes (QECCs)
\cite{shor-error-correct,Knill-Laflamme-1997,Bennett-1996} offer
protection for quantum information; however, often at a high cost in
the number of auxiliary qubits and with technologically difficult
requirements \cite{Knill-error-bound,Rahn-2002,%
  Dennis-Kitaev-Landahl-Preskill-2002,Steane-2003,%
  Fowler-QEC-2004,Fowler-2005,fowler-thesis-2005,%
  Knill-nature-2005,Knill-2005,Raussendorf-Harrington-2007}.  The
optimization then must play an important role since by simplifying the
syndrome measurements one can lower the overhead and perform error
correction in parallel.  Quantum codes with limited stabilizer
generator weights, a quantum analog of classical low-density parity
check (LDPC) codes \cite{Postol-2001,MacKay:IEEE2004} seem to offer a
viable solution here as the simple structure of stabilizer generators
will require fewer ancillas, fewer rounds of syndrome measurements and 
parallelism. Furthermore, by analogy with classical LDPC codes, there
might exist efficient algorithms for encoding and decoding
\cite{MacKay:IEEE2004}.

However, quantum LDPC codes come at a higher price compared to their
classical analogs due to stringent requirements on code parameters
originating from commutativity of stabilizer generators. In fact,
there are no known families of asymptotically good quantum LDPC codes,
or any bounds suggesting their existence. It then becomes an
intriguing question of the best asymptotic properties achievable with
quantum LDPC codes. In such a setting explicit code designs become
important, in particular, for establishing the lower bounds on the
parameters: the number of encoded qubits $K$ and minimum distance $D$
for a given block length $N$ (which also defines the code rate $K/N$),
e.g., given the upper limit on the weight of stabilizer
generators. 

The best-known quantum LDPC codes (that are also local) are Kitaev's
toric codes and related surface codes with the minimum distance
scaling as $\sqrt{N}$
\cite{kitaev-anyons,Dennis-Kitaev-Landahl-Preskill-2002,%
  Raussendorf-Harrington-2007,Bombin:PRA2007}.  Existence of
single-qubit-encoding LDPC codes with the distance scaling as
$\sqrt{N}\log N$ has been proved in
Ref.~\onlinecite{Freedman:feb2002}.  Tillich and Z\'emor proposed a
finite-rate generalization of toric codes \cite{Tillich2009}.  The
construction relates a quantum code to a direct product of hypergraphs
corresponding to two classical binary codes. Generally, thus obtained
quantum LDPC codes have finite rates and the distances that scale as a
square root of the block length.

In one of the first studies of quantum LDPC codes MacKay et
al. \cite{MacKay:IEEE2004} constructed so called bicycle codes.
Numerically, these codes exhibit good decoding properties; however,
the minimum distance of such codes is unknown.  The quantum
hypergraph-product codes \cite{Tillich2009}, on the other hand, are an
example of LDPC codes with known parameters; however, decoding such
codes may be difficult.  The existence of a finite noise threshold,
with and without syndrome measurement errors, has been recently
established for limited-stabilizer-weight quantum hypergraph-product
codes\cite{Kovalev:2012arXiv}, as well as for any such LDPC code family
with the distance scaling as the square root of block length.
These results, however, might not apply to the constructions of quantum LDPC
codes based on finite geometries\cite{Aly:2008,Farinholt:arXiv2012} and to the constructions based on Cayley Graphs\cite{Zemor:arXiv2012}
due to the unbounded weight of stabilizer generators.

In this work, we construct a large family of codes that in the
limiting cases reduce to (generalized) bicycle and hypergraph-product
codes, hence we call them \emph{hyperbicycle}.  We show that
hyperbicycle codes contain new code families with finite rates and the
distances that scale as a square root of the block length.  In
addition, the hyperbicycle construction can improve the rate of the
hypergraph-product codes while preserving the estimated error
threshold.

\section{Preliminaries }

In this section, we define classical and quantum error correcting
codes. We also review some of the known LDPC code constructions.

\subsection{Classical error correcting codes}

A \emph{classical} $q$-ary block error-correcting code $(n,K,d)_{q}$
is a set of $K$ length-$n$ strings over an alphabet with $q$ symbols.
Different strings represent $K$ distinct messages which can be transmitted.
The (Hamming) distance between two strings is the number of positions
where they differ. Distance $d$ of the code ${\cal C}$ is the minimum
distance between any two different strings from ${\cal C}$.

In the case of \emph{linear} codes, the elements of the alphabet must
form a Galois field $\mathbb{F}_{q}$; all strings form $n$-dimensional
vector space $\mathbb{F}_{q}^{n}$. A linear error-correcting code
$[n,k,d]_{q}$ is a $k$-dimensional subspace of $\mathbb{F}_{q}^{n}$.
The distance of a linear code is just the minimum weight of a non-zero
vector in the code, where weight $\wgt(\mathbf{c})$ of a vector $\mathbf{c}$
is the number of non-zero elements. A basis of the code is formed
by the rows of its \emph{generator matrix} $G$. All vectors that
are orthogonal to the code form the corresponding $(n-k)$-dimensional
dual code, its generator matrix is the parity-check matrix $H$ of
the original code.

For a \emph{binary} code ${\cal C}[n,k,d]$, the field is just
$\mathbb{F}_{2}=\{0,1\}$.  For a \emph{quaternary} code $C$, the field
is $\mathbb{F}_{4}=\{0,1,\omega,\overline{\omega}\}$, with
\begin{equation}
\omega^{2}=\omega+1,\quad\omega^{3}=1,\;\,\mathrm{and}\;\,\overline{\omega}\equiv\omega^{2}.\label{eq:F4def}
\end{equation}

For non-binary codes, there is also a distinct class of
\emph{additive} classical codes, defined as subsets of
$\mathbb{F}_{q}^{n}$ closed under addition (in the binary case these
are just linear codes).

A code ${\cal C}$ is cyclic if inclusion
$(c_{0},c_{1},\ldots,c_{n-1})\in{\cal C}$ implies that
$(c_{n-1},c_{0},c_{1},\ldots,c_{n-2})\in{\cal C}$. Codes that are both
linear and cyclic are particularly simple: by mapping vectors to
polynomials in the natural way, $\mathbf{c}\to c(x)\equiv
c_{0}+c_{1}x+\ldots+c_{n-1}x^{n-1}$, it is possible to show that any
such code consists of polynomials which are multiples of a single
generator polynomial $g(x)$, which must divide $x^{n}-1$ (using the
algebra corresponding to the field $\mathbb{F}_{q}$). The quotient
defines the \emph{check polynomial} $h(x)$,
\begin{equation}
h(x)g(x)=x^{n}-1\label{eq:linear-cyclic-condition}
\end{equation}
which is the canonical generator polynomial of the dual code. The
degree of the generator polynomial is $\deg g(x)=n-k$. The
corresponding generator matrix $G$ can be chosen as the first $k$ rows
of the \emph{circulant matrix} $C_n$ formed by subsequent shifts of
the vector that corresponds to $g(x)$,
explicitly:
\begin{equation}
C_{n}=\left(\begin{array}{ccccc}
g_{0} & g_{1} & g_{2} & \ldots & g_{n-1}\\
g_{n-1} & g_{0} & g_{1}\\
g_{n-2} & g_{n-1} & g_{0} &  & \vdots\\
\vdots &  &  & \ddots\\
g_{1} & g_{2} & g_{3} & \ldots & g_{0}
\end{array}\right).\label{eq:circulant}
\end{equation}

\subsection{Quantum stabilizer codes}

Binary quantum error correcting codes (QECCs) are defined on the complex
Hilbert space $\Hn$ where $\mathcal{H}_{2}$ is the complex Hilbert
space of a single qubit $\alpha\left|0\right\rangle +\beta\left|1\right\rangle $
with $\alpha,\beta\in\mathbb{C}$ and $\left|\alpha\right|^{2}+\left|\beta\right|^{2}=1$.
Any operator acting on such an $n$-qubit state can be represented
as a combination of Pauli operators which form the Pauli group $\mathscr{P}_{n}$
of size $2^{2n+2}$ with the phase multiplier $i^{m}$: 
\begin{equation}
\mathscr{P}_{n}=i^{m}\{I,X,Y,Z\}^{\otimes n},\; m=0,\ldots,3\:,\label{eq:PauliGroup}
\end{equation}
where $X$, $Y$, and $Z$ are the usual Pauli matrices and $I$ is
the identity matrix. It is customary to map the Pauli operators, up
to a phase, to two binary strings, $\mathbf{v},\mathbf{u}\in\{0,1\}^{\otimes n}$
\cite{Calderbank-1997}, 
\begin{equation}
U\equiv i^{m'}X^{\mathbf{v}}Z^{\mathbf{u}}\:\rightarrow(\mathbf{v},\mathbf{u}),\label{eq:mapping}
\end{equation}
where $X^{\mathbf{v}}=X_{1}^{v_{1}}X_{2}^{v_{2}}\ldots X_{n}^{v_{n}}$
and $Z^{\mathbf{u}}=Z_{1}^{u_{1}}Z_{2}^{u_{2}}\ldots Z_{n}^{u_{n}}$.
A product of two quantum operators corresponds to a sum ($\mod2$)
of the corresponding pairs $(\mathbf{v}_{i},\mathbf{u}_{i})$.

An $[[n,k,d]]$ stabilizer code $\mathcal{Q}$ is a $2^{k}$-dimensional
subspace of the Hilbert space $\Hn$ stabilized by an Abelian stabilizer
group $\mathscr{S}$ generated by the commuting Pauli operators (generators)
$G_{1},\ldots,G_{n-k}$, i.e., $\mathscr{S}=\left\langle G_{1},\ldots,G_{n-k}\right\rangle $
and $-\openone\not\in\mathscr{S}$ \cite{gottesman-thesis}. Explicitly,
\begin{equation}
\mathcal{Q}=\{\ket\psi:S\ket\psi=\ket\psi,\forall S\in\mathscr{S}\}.\label{eq:stabilizer-code}
\end{equation}
Each generator $G_{i}\in\mathscr{S}$ is mapped according to Eq.~(\ref{eq:mapping})
in order to obtain the binary generator matrix $H=(A_{X}|A_{Z})$
in which each row corresponds to a generator, with rows of $A_{X}$
formed by $\mathbf{v}$ and rows of $A_{Z}$ formed by $\mathbf{u}$
vectors. For generality, we assume that the matrix $H$ may also contain
unimportant linearly dependent rows which are added after the mapping
has been done. The commutativity of stabilizer generators corresponds
to the following condition on the binary matrices $A_{X}$ and $A_{Z}$:
\begin{equation}
A_{X}A_{Z}^{T}+A_{Z}A_{X}^{T}=0\;(\mod2).\label{eq:product}
\end{equation}

A more narrow set of Calderbank-Shor-Steane (CSS) codes
\cite{Calderbank-Shor-1996} contains codes whose stabilizer generators
can be chosen to contain products of only Pauli $X$ or Pauli $Z$
operators. For these codes the stabilizer generator matrix can be
chosen in the form:
\begin{equation}
H=\left(\begin{array}{c|c}
G_{X} & 0\\
0 & G_{Z}
\end{array}\right),\label{eq:CSS}
\end{equation}
where the commutativity condition simplifies to $G_{X}G_{Z}^{T}=0$.

The dimension of a quantum code is 
\begin{equation}
k=n-\rank H;\label{eq:rank-dimension}
\end{equation}
for a CSS code this simplifies to 
\begin{equation}
k=n-\rank G_{X}-\rank G_{Z}.\label{eq:rank-dimension-CSS}
\end{equation}

The distance $d$ of a quantum stabilizer code is given by the minimum
weight of an operator $U$ which commutes with all operators from
the stabilizer $\mathscr{S}$, but is not a part of the stabilizer,
$U\not\in\mathscr{S}$. In terms of the binary vector pairs $(\mathbf{a},\mathbf{b})$,
this is equivalent to a minimum weight of the bitwise $\OR(\mathbf{a},\mathbf{b})$
of all pairs satisfying the symplectic orthogonality condition, 
\begin{equation}
A_{X}\mathbf{b}+A_{Z}\mathbf{a}=0,\label{eq:symplectic}
\end{equation}
which are not linear combinations of the rows of $H$. A code of distance
$d$ can detect any error of weight up to $d-1$, and correct up to
$\lfloor d/2\rfloor$.

In an equivalent representation, one can map any Pauli operator $U$ in
Eq.~(\ref{eq:mapping}), to a quaternary vector over $\mathbb{F}_{4}$,
$\mathbf{e}\equiv\mathbf{u}+\omega\mathbf{v}$. A product of two
quantum operators corresponds to a sum ($\mod2$) of the corresponding
vectors.  Two Pauli operators commute if and only if the \emph{trace
  inner product}
$\mathbf{e}_{1}*\mathbf{e}_{2}\equiv\mathbf{e}_{1}\cdot\overline{\mathbf{e}}_{2}+\overline{\mathbf{e}}_{1}\cdot\mathbf{e}_{2}$
of the corresponding vectors is zero (which is equivalent to the
symplectic orthogonality condition), where
$\overline{\mathbf{e}}\equiv\mathbf{u}+\overline{\omega}\mathbf{v}$.
With this map, generators of a stabilizer group are mapped to rows of
a generator $\mathbb{G}$ of an additive code over $\mathbb{F}_{4}$,
with the condition that the trace inner product of any two rows
vanishes\cite{Calderbank-1997} [see Eq.~(\ref{eq:product})]. The
vectors generated by rows of $\mathbb{G}$ correspond to stabilizer
generators which act trivially on the code; these vectors form the
\emph{degeneracy group} and are omitted from the distance
calculation. For CSS codes in Eq.~(\ref{eq:CSS}) the generator matrix
is a direct sum $\mathbb{G}=G_{x}\oplus\omega G_{z}$.  In the
following, we will use both, quaternary and binary, representations.

A classical LDPC code 
is a code with a sparse parity-check matrix $H$.  For a regular
$(j,l)$ quantum LDPC code, every column and every row of $H$ have
weights $j$ and $l$ respectively, while for a $(j,l)$-limited quantum LDPC
code these weights are limited from above by $j$ and $l$.  

In the case of quantum LDPC codes, these properties apply to the
generator matrix $\mathbb{G}$ whose rows correspond to the generators
of the stabilizer group.

\subsection{Bicycle codes}

In one of the first studies of quantum LDPC codes, MacKay et.\ al.\ proposed
a CSS code construction\cite{MacKay:IEEE2004} which can be written
in a block form as: 
\begin{equation}
G_{X}=G_{Z}=\left(A,A^{T}\right),\label{eq:Bicycle0}
\end{equation}
where $A$ is a binary circulant matrix. Bicycle codes are
obtained after some of the rows in $G_{X}$ or $G_{Z}$ are
deleted. Numerically, such codes show good error-correction
capabilities\cite{MacKay:IEEE2004,Poulin-Chung-2008}; however, the
distance of such codes is unknown.

\begin{figure}[htbp]
\centering \includegraphics[width=0.48\columnwidth]{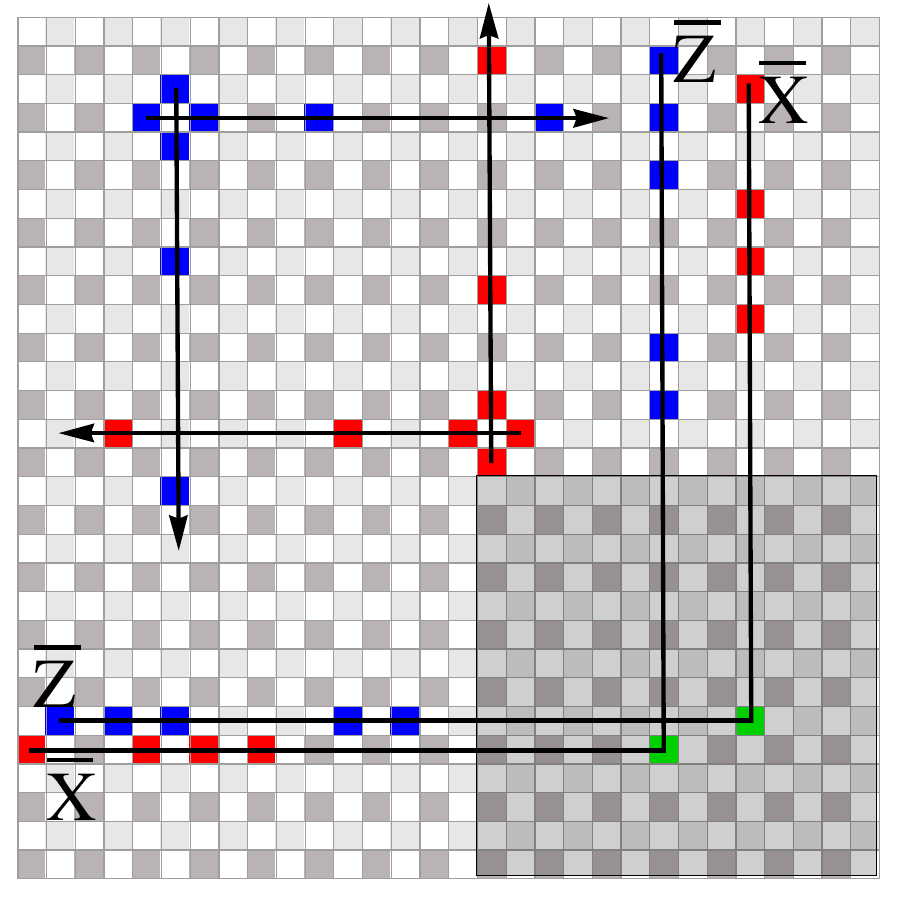}\hskip0.1in
\includegraphics[width=0.48\columnwidth]{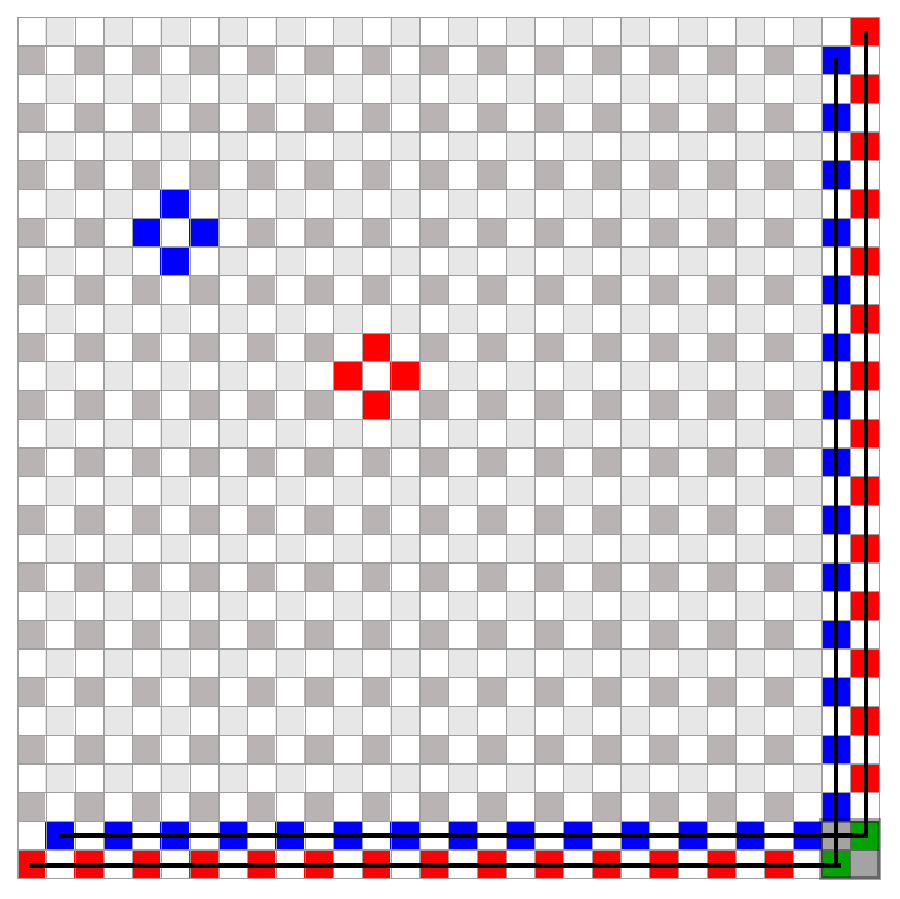} \caption{(Color online) Left: Two stabilizer generators (marked by arrows)
and two pairs of anticommuting logical operators (marked by lines)
of a $[[450,98,5]]$ code in Eq.~(\ref{eq:Till}) formed by circulant
matrices $\mathcal{H}_{1}=\mathcal{H}_{2}$ corresponding to coefficients
of the polynomial $h(x)=1+x+x^{3}+x^{7}$ (red -- $X$ operators, blue
-- $Z$ operators, green -- overlap of $Z$ and $X$ operators, dark
and light gray -- dual sublattices of physical qubits). All other
stabilizer generators are obtained by shifts over the same sublattice
with periodic boundaries. In the shaded region, each gray square uniquely
corresponds to a different logical operator, thus $98$ encoded logical
qubits. Right: same for the toric code $[[450,2,15]]$.}

\label{fig:Visualization} 
\end{figure}

\subsection{Hypergraph-product codes}

Tillich and Z\'emor proposed a CSS construction which can be interpreted
as a finite-rate generalization of toric codes\cite{Tillich2009}
and allows for LDPC constructions. For such codes, the generator matrix
is constructed from a product of two hypergraphs, each corresponding
to a parity check matrix of a classical binary code. The associated
CSS code can be recast in a matrix form with the generators given
by\cite{Kovalev:ISIT2012} 
\begin{equation}
\begin{array}{c}
{\displaystyle G_{X}=(E_{2}\otimes\mathcal{H}_{1},\mathcal{H}_{2}\otimes E_{1}),}\\
{\displaystyle G_{Z}=(\mathcal{H}_{2}^{T}\otimes\widetilde{E}_{1},\widetilde{E}_{2}\otimes\mathcal{H}_{1}^{T}).}
\end{array}\label{eq:Till}
\end{equation}
Here each sublattice block is constructed as a Kronecker product
(denoted with ``$\otimes$'') of two binary matrices $\mat{H}_{1}$
(dimensions $r_{1}\times n_{1}$) and $\mat{H}_{2}$ (dimensions
$r_{2}\times n_{2}$), and $E_{i}$ and $\widetilde{E}_{i}$, $i=1,2$,
are unit matrices of dimensions given by $r_{i}$ and $n_{i}$,
respectively. The matrices $G_{X}$ and $G_{Z}$, respectively, have
$r_{1}r_{2}$ and $n_{1}n_{2}$ rows (not all of the rows are
necessarily linearly independent), and they both have $N\equiv
r_{2}n_{1}+r_{1}n_{2}$ columns, which gives the block length of the
quantum code. The commutativity condition $G_{X}G_{Z}^{T}=0$ is
obviously satisfied by Eq.~(\ref{eq:Till}) since the Kronecker product
obeys $(A\otimes B)(C\otimes D)=AC\otimes BD$.

The parameters $[[N,K,D]]$ of thus constructed quantum code are determined
by those of the four classical codes which use the matrices $\mat{H}_{1}$,
$\mat{H}_{2}$, $\mat{H}_{1}^{T}$, and $\mat{H}_{2}^{T}$ as the
parity-check matrices. The corresponding parameters are introduced
as 
\begin{equation}
\mat{C}_{\mat{H}_{i}}=[n_{i},k_{i},d_{i}],\quad\mat{C}_{\mat{H}_{i}^{T}}=[{\widetilde{n}}_{i},\widetilde{k}_{i},\widetilde{d}_{i}],\quad i=1,2,\label{eq:params}
\end{equation}
where we use the convention \cite{Tillich2009} that the distance
$d_{i}(\widetilde{d}_{i})=\infty$ if $k_{i}(\widetilde{k}_{i})=0$.
The matrices $\mat{H}_{i}$ are arbitrary, and are allowed to have
linearly-dependent rows and/or columns. As a result, both $k_{i}=n_{i}-\rank\mat{H}_{i}$
and $\widetilde{k}_{i}=\widetilde{n}_{i}-\rank\mat{H}_{i}$ may be
non-zero at the same time as the block length of the ``transposed''
code $\mat{C}_{\mat{H}_{i}^{T}}$ is given by the number of rows of
$\mat{H}_{i}$, $\widetilde{n}_{i}=r_{i}$.

Specifically, for the hypergraph-product code (\ref{eq:Till}), we
have $N=r_{2}n_{1}+r_{1}n_{2}$, $K=2k_{1}k_{2}-k_{1}s_{2}-k_{2}s_{1}$
with $s_{i}=n_{i}-r_{i},\, i=1,2\,$ (Theorem 7 from Ref.~\cite{Tillich2009}),
while the distance $D$ satisfies the conditions $D\ge\min(d_{1},d_{2},\widetilde{d}_{1},\widetilde{d}_{2})$
(Theorem 9 from Ref.~\cite{Tillich2009}), and two upper bounds (Lemma
10 from Ref.~\cite{Tillich2009}): if $k_{1}>0$ and $\widetilde{k}_{2}>0$,
then $D\le d_{1}$; if $k_{2}>0$ and $\widetilde{k}_{1}>0$, then
$D\le d_{2}$.

A full-rank parity check matrix $\mathcal{H}_{1}$ of a binary code
with parameters $\mathcal{C}_{\mat{H}_{1}}=[n_{1},k_{1},d_{1}]$
($r_{1}=n_{1}-k_{1}$, $\widetilde{k}_{1}=0$) and
$\mathcal{H}_{2}=\mathcal{H}_{1}^{T}$ define a quantum code with
parameters
$[[(n_{1}-k_{1})^{2}+n_{1}^{2},k_{1}^{2},d_{1}]]$\cite{Tillich2009}.
Furthermore, a family of finite-rate $(h,v)$-limited classical LDPC
codes with asymptotically finite relative distance will correspond to
a family of finite-rate $(v,h+v)$-limited quantum LDPC codes with the
distance scaling as $D\propto\sqrt{N}$.

\section{Two-sublattice codes}

The commutativity condition for QECCs in Eq.~(\ref{eq:product})
puts a strong limitation on suitable parity check matrices. The problem
becomes even more difficult when the additional requirement of LDPC
structure is imposed. In particular, this strongly limits possible
counting arguments for establishing bounds on code parameters. In
such a setting, constructions based on some ansatz become very useful.
In the following, we study several CSS constructions based on two
sublattices corresponding to the columns of the binary matrices $A_{1}(B_{2}^{T})$
and $B_{1}(A_{2}^{T})$: 
\begin{equation}
G_{X}=\left(A_{1},B_{1}\right),\quad G_{Z}=\left(B_{2}^{T},A_{2}^{T}\right),\label{eq:two-sublattice-general}
\end{equation}
where the matrices $A_{i}$, $B_{i}$, $i=1,2$, satisfy the condition
$A_{1}B_{2}+B_{1}A_{2}=0$ (we assume binary linear algebra throughout
this paper).

\subsection{Two-sublattice CSS code from a generic stabilizer code}

A large number of two-sublattice CSS codes~(\ref{eq:two-sublattice-general})
can be obtained from regular stabilizer codes by the following \begin{theorem}
\label{th:two-sublattice-theorem} For any quantum stabilizer code
$[[N,K,D]]$ with the generator matrix 
\begin{equation}
H=(A|B),\label{eq:two-sublattice}
\end{equation}
there is a reversible mapping to a two-sublattice quantum CSS code~(\ref{eq:two-sublattice-general})
with $A_{1}=A_{2}^{T}=A$, $B_{1}=B_{2}^{T}=B$ and the parameters
$[[2N,2K,D']]$, where $D\le D'\le2D$. 
\end{theorem}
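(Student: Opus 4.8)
The plan is to exhibit the reversible map explicitly, verify it produces a genuine CSS code with block length $2N$ and $2K$ logical qubits, and then bound $D'$ by translating logical operators of the two codes into one another with control on their weights.

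First I would spell out the image code. Taking $A_{1}=A_{2}^{T}=A$ and $B_{1}=B_{2}^{T}=B$ in Eq.~(\ref{eq:two-sublattice-general}) gives $G_{X}=(A\,|\,B)$ and $G_{Z}=(B\,|\,A)$, each with $2N$ columns---the two ``sublattices'' of $N$ qubits each. The CSS commutativity requirement $A_{1}B_{2}+B_{1}A_{2}=AB^{T}+BA^{T}$ is exactly the commutativity condition Eq.~(\ref{eq:product}) of the input stabilizer code, so it holds with no extra work, and the map is reversible because $A$ and $B$ are read straight off $G_{X}$. Since $G_{Z}$ is obtained from $G_{X}$ by exchanging its two blocks (a column permutation), $\rank G_{Z}=\rank G_{X}=\rank H=N-K$, and Eq.~(\ref{eq:rank-dimension-CSS}) yields $K'=2N-2(N-K)=2K$.

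Next I would handle the distance. By the standard CSS fact $D'=\min(d_{X},d_{Z})$, where $d_{X}$ ($d_{Z}$) is the least weight of a pure $X$-type ($Z$-type) operator that commutes with all stabilizers but is not one. The heart of the argument is a bijection between nontrivial $X$-type logicals of the image code and nontrivial (generally mixed) logical Pauli operators of the input code. An $X$-operator with support $\mathbf{p}$ on the first block and $\mathbf{q}$ on the second commutes with the rows of $G_{Z}=(B\,|\,A)$ iff $B\mathbf{p}+A\mathbf{q}=0$, which is precisely the symplectic orthogonality condition Eq.~(\ref{eq:symplectic}) for the input-code operator $X^{\mathbf{p}}Z^{\mathbf{q}}$; and the $X$-stabilizer group of the image code---the row span of $(A\,|\,B)$ read as an $X$-type group, linearly dependent rows included---coincides with the stabilizer group of $[[N,K,D]]$ in the $(\mathbf{v},\mathbf{u})$ representation. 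Hence $(\mathbf{p},\mathbf{q})$ is a nontrivial $X$-logical of the image code exactly when $X^{\mathbf{p}}Z^{\mathbf{q}}$ is a nontrivial logical operator of $[[N,K,D]]$. Running the same comparison with the two blocks and $X\leftrightarrow Z$ interchanged (the image code even has an obvious automorphism: swap the two blocks and apply a global Hadamard, which exchanges $G_{X}$ and $G_{Z}$) shows the $Z$-type logicals are controlled the same way and that $d_{X}=d_{Z}$, so both reduce to minima over the \emph{same} set of input logicals.

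Finally I would compare weights along the bijection: the image-code operator has weight $\wgt(\mathbf{p})+\wgt(\mathbf{q})$ whereas the matching input-code operator has weight $\wgt(\OR(\mathbf{p},\mathbf{q}))$, and for any binary vectors $\wgt(\OR(\mathbf{p},\mathbf{q}))\le\wgt(\mathbf{p})+\wgt(\mathbf{q})\le 2\,\wgt(\OR(\mathbf{p},\mathbf{q}))$. Minimizing over the common set of nontrivial logicals, the left inequality gives $D\le D'$ and the right one gives $D'\le 2D$, which is the claim. I do not anticipate a real obstacle here; the points needing a little care are that ``trivial'' matches on both sides (the two stabilizer row spans literally agree, including dependent rows) and that one keeps the direction of each inequality straight when passing to minima.
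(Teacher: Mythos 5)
Your proposal is correct and follows essentially the same route as the paper's proof: the dimension via $\rank G_{X}=\rank G_{Z}=\rank H$, and the distance via the correspondence between logical pairs $(\mathbf{p},\mathbf{q})$ of the input code and pure-type logicals of the CSS code together with the inequality $\wgt\OR(\mathbf{p},\mathbf{q})\le\wgt(\mathbf{p})+\wgt(\mathbf{q})\le2\wgt\OR(\mathbf{p},\mathbf{q})$. You merely spell out more explicitly (via $D'=\min(d_X,d_Z)$ and the block-swap symmetry) what the paper compresses into mapping $(\mathbf{a}|\mathbf{b})$ to the pair $\mathbf{e}_x,\mathbf{e}_z$.
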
 \begin{proof} Explicitly, the generator matrices are
\begin{equation}
G_{X}=(A,B),\quad G_{Z}=(B,A).\label{eq:two-sublattice-CSS}
\end{equation}
The dimension of the code simply follows from Eqs.~(\ref{eq:rank-dimension}),
(\ref{eq:rank-dimension-CSS}), given that $\rank G_{X}=\rank G_{Z}=\rank H$.
Any binary vector $\mathbf{e}=(\mathbf{a}|\mathbf{b})$ such that
$A\mathbf{b}+B\mathbf{a}=0$ maps to a pair of double-size vectors
$\mathbf{e}_{z}=(\mathbf{b},\mathbf{a})$, $\mathbf{e}_{x}=(\mathbf{a},\mathbf{b})$
which satisfy $G_{X}\mathbf{e}_{z}=0$, $G_{Z}\mathbf{e}_{x}=0$;
the corresponding weights obey the inequality $\wgt\OR(\mathbf{a},\mathbf{b})\le\wgt(\mathbf{a},\mathbf{b})\le2\wgt\OR(\mathbf{a},\mathbf{b})$,
which ensures the conditions on the distance. 
 It is easy to check that the reverse mapping also works. \end{proof}
Note that an original code that exceeds the generic quantum Gilbert-Varshamov
(GV) bound\cite{Feng:dec.2004} is mapped to a CSS code that exceeds
the version of the GV bound specific for such codes\cite{Calderbank-Shor-1996}.
Also, an original sparse code is mapped to a sparse code, with the
same limit on the column weight, and row weight at most doubled. So,
if one has a non-CSS code and wants to use one of the measurement
techniques designed for such codes, this
can be done by first constructing the corresponding CSS code.

We use the reverse version of this mapping in Sec.~\ref{sec:non-CSS-hb}
to construct the non-CSS versions of hyperbicycle codes.

\subsection{Generalized bicycle codes}

Let us now start with two commuting square $n$ by $n$ binary matrices,
$AB+BA=0$. Then, we can satisfy the general two-sublattice ansatz~(\ref{eq:two-sublattice-general})
by taking $A_{1}=A_{2}=A$, $B_{1}=B_{2}=B$, which gives 
\begin{equation}
G_{X}=\left(A,B\right),\quad G_{Z}=\left(B^{T},A^{T}\right).\label{eq:Bicycle-1}
\end{equation}

In particular, the commutativity is guaranteed for circulant matrices
$A$ and $B$, which corresponds to a generalization of the bicycle
codes\cite{MacKay:IEEE2004}, see Eq.~(\ref{eq:Bicycle0}). In this
case, we map the linear combinations of rows in $G_{X}$ to a classical
length-$n$ additive cyclic code over $\mathbb{F}_{4}$, where elements
of the code are constructed from the generator matrix $\mathbb{G}=\omega A+B$.
If the circulant matrices $A$ and $B$ are generated by the polynomials
$f_{1}(x)$ and $f_{2}(x)$, respectively, the space of the additive
code corresponding to the stabilizer is generated by the polynomial
$g(x)=f_{1}(x)\omega+f_{2}(x)$ modulo $x^{n}-1$. That is, elements
of the stabilizer are given by the coefficients of the polynomials
$b(x)g(x)\mod(x^{n}-1)$, with arbitrary binary $b(x)$.

A canonical form of cyclic additive codes over $\mathbb{F}_{4}$ has
been introduced in Ref.~\onlinecite{Calderbank-1997}, where Theorem
14 states that any cyclic additive code can be represented via two
generators as $\left\langle \omega p(x)+q(x),r(x)\right\rangle $
$\mod(x^{n}-1)$ with $p(x)=\gcd[f_{1}(x),x^{n}-1]$, $r(x)=\gcd[(x^{n}-1)f_{2}(x)/p(x),x^{n}-1]$
and $\deg q(x)<\deg r(x)$ ($\gcd$ stands for the greatest common
divisor). In this case the code dimensionality is $k=2n-\deg p(x)-\deg r(x)$.
The special case of cyclic additive codes with a single generator
has been analyzed in Ref.~\cite{Kovalev-PRA2011} in which case the
code dimensionality simplifies to $k=n-\deg p(x)$, which formally
corresponds to $r(x)=x^{n}-1$. Note that, unlike the case of the
usual quantum additive cyclic codes\cite{Calderbank-1997,Kovalev-PRA2011},
the mapping from Eq.~(\ref{eq:Bicycle-1}) works for any circulant
matrices $A$ and $B$; no additional commutativity condition is needed
for the generator polynomials $p(x)$, $q(x)$, and $r(x)$. The parameters
of thus obtained quantum codes are given by

\begin{theorem} \label{th:tiled4} The generalized bicycle codes
in Eq.~(\ref{eq:Bicycle-1}) have the block length $N=2n$, the number
of encoded qubits $K=2\deg p(x)+2\deg r(x)-2n$ {[}$K=2\deg p(x)$
in the single generator case{]} and the distance exceeding or equal
that of the classical code over $\mathbb{F}_{4}$ formed by codewords
orthogonal to $\mathbb{G}$ with respect to the trace inner product.
\end{theorem}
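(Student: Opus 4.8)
The block-length claim is immediate since $G_X=(A,B)$ has $2n$ columns. For the dimension I plan to use $K=N-\rank H=2n-\rank G_X-\rank G_Z$ [Eq.~(\ref{eq:rank-dimension-CSS})] together with two identities. First, $\rank G_X$ equals the $\mathbb{F}_2$-dimension of the additive cyclic $\mathbb{F}_4$ code $\mathcal{S}$ generated by $\mathbb{G}=\omega A+B$: the $\mathbb{F}_2$-linear combination of the rows of $\mathbb{G}$ with coefficients $\mathbf{c}$ is $\omega(\mathbf{c}^{T}A)+\mathbf{c}^{T}B$, which vanishes iff $\mathbf{c}^{T}A=\mathbf{c}^{T}B=0$ because $1,\omega$ are independent over $\mathbb{F}_2$; hence this map has the same kernel as $\mathbf{c}\mapsto\mathbf{c}^{T}(A,B)$, so $\dim_{\mathbb{F}_2}\mathcal{S}=\rank(A,B)=\rank G_X$. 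Second, $\rank G_Z=\rank G_X$: writing $A=f_1(S)$, $B=f_2(S)$ for the cyclic shift $S$, the left kernels of $(A,B)$ and of $(B^{T},A^{T})$ are $\ker A^{T}\cap\ker B^{T}$ and $\ker A\cap\ker B$, and a short computation in $\mathbb{F}_2[x]/(x^{n}-1)$ shows that both have dimension $\deg\gcd(f_1,f_2,x^{n}-1)$, since the reciprocal map $f(x)\mapsto x^{\deg f}f(1/x)$ preserves the gcd with $x^{n}-1$ up to a unit. Combining these with the canonical-form dimension $\dim_{\mathbb{F}_2}\mathcal{S}=2n-\deg p-\deg r$ recalled above (Theorem 14 of Ref.~\onlinecite{Calderbank-1997}) yields $K=2n-2\dim_{\mathbb{F}_2}\mathcal{S}=2\deg p+2\deg r-2n$; in the single-generator case $r(x)=x^{n}-1$, so $\deg r=n$ and $K=2\deg p$.

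For the distance, let $d_{\mathbb{F}_4}$ be the minimum Hamming weight of a nonzero vector of the length-$n$ code $\mathcal{S}^{\perp}$ over $\mathbb{F}_4$ of all vectors trace-orthogonal to the rows of $\mathbb{G}$. Since this is a CSS code, a Pauli $(\mathbf{a}|\mathbf{b})$, with $\mathbf{a},\mathbf{b}\in\mathbb{F}_2^{2n}$ the $X$- and $Z$-strings, that commutes with $\mathscr{S}$ but is not in it must have $\mathbf{a}\notin\mathrm{rowspace}\,G_X$ or $\mathbf{b}\notin\mathrm{rowspace}\,G_Z$; replacing the complementary string by $\mathbf{0}$ produces a pure nontrivial logical operator of weight at most $\wgt\OR(\mathbf{a},\mathbf{b})$, so it suffices to bound pure operators. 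Write $\mathbf{b}=(\mathbf{b}_1|\mathbf{b}_2)$, $\mathbf{b}_i\in\mathbb{F}_2^{n}$; the condition $G_X\mathbf{b}=A\mathbf{b}_1+B\mathbf{b}_2=0$ is exactly that $\mathbf{b}_1+\omega\mathbf{b}_2$ be trace-orthogonal to every row $\mathbf{g}_i$ of $\mathbb{G}$, since a one-line check of the trace inner product gives $(\mathbf{b}_1+\omega\mathbf{b}_2)*\mathbf{g}_i=(A\mathbf{b}_1+B\mathbf{b}_2)_i$. Thus $\mathbf{b}\mapsto\mathbf{b}_1+\omega\mathbf{b}_2$ maps $\ker G_X$ into $\mathcal{S}^{\perp}$, sends nonzero strings to nonzero codewords, and does not increase weight, $\wgt(\mathbf{b}_1+\omega\mathbf{b}_2)\le\wgt(\mathbf{b}_1)+\wgt(\mathbf{b}_2)$ — the latter being the weight of the pure $Z$-type operator — so every nontrivial $Z$-type logical has weight $\ge d_{\mathbb{F}_4}$. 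For a pure $X$-type operator $\mathbf{a}=(\mathbf{a}_1|\mathbf{a}_2)\in\ker G_Z$, i.e.\ $B^{T}\mathbf{a}_1+A^{T}\mathbf{a}_2=0$, I would use the circulant identity $C^{T}\rho=\rho C$ valid for every circulant $C$, with $\rho$ the coordinate reversal, to rewrite this as $A(\rho\mathbf{a}_2)+B(\rho\mathbf{a}_1)=0$; then $(\rho\mathbf{a}_2)+\omega(\rho\mathbf{a}_1)$ is a nonzero vector of $\mathcal{S}^{\perp}$ of Hamming weight at most $\wgt(\mathbf{a}_1)+\wgt(\mathbf{a}_2)$, so every nontrivial $X$-type logical also has weight $\ge d_{\mathbb{F}_4}$. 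Combining the two cases, $D\ge d_{\mathbb{F}_4}$.

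The fiddliest ingredient, I expect, is keeping the circulant transpose/reciprocal bookkeeping consistent: it is what forces the reversal $\rho$ in the $X$-type argument and what makes $\rank G_Z=\rank G_X$ hold in the first place. A second point requiring care is the identification $\rank G_X=\dim_{\mathbb{F}_2}\mathcal{S}$ — one should not simply read off ``half of $\rank H$'' from the $\mathbb{F}_4$ picture without justification — but the $\{1,\omega\}$-independence argument settles it cleanly. The CSS reduction to pure logical operators and the explicit trace-product computation are routine.
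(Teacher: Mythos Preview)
Your proof is correct and follows the same line as the paper's: the dimension comes from identifying $\rank G_X$ with the $\mathbb{F}_2$-dimension of the cyclic additive code in canonical form, and the distance bound from embedding $\ker G_X$ (and, via the reversal trick, $\ker G_Z$) weight-nonincreasingly into the trace-dual code $\mathcal{S}^\perp$. Your version is considerably more explicit than the paper's two-sentence argument---in particular you actually justify $\rank G_X=\rank G_Z$ via the reciprocal/transpose bookkeeping, a point the paper simply passes over.
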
 \begin{proof} The code dimensionality immediately
follows from the parameters of the canonical form of the code generated
by $g(x)$. The orthogonal code contains the quantum code, hence the
distance estimate. \end{proof} Note that the distance estimate in
Theorem \ref{th:tiled4} is tight only for pure codes since a possibility
for degeneracy is not taken into consideration.

\begin{example}Suppose a cyclic linear code $[n,k,d]$ over $\mathbb{F}_{4}$
with a generator polynomial $\varrho(x)$ that divides $x^{n}-1$
generates a code space $\mathbb{G}^{\perp}$. Then the parameters
of the quantum CSS code in Eq.~(\ref{eq:Bicycle-1}) are $[[2n,2n-4\deg\varrho(x),\geq d]]$.
This construction is similar to non-CSS code construction from linear
cyclic codes in Ref. \cite{Calderbank-1997}, except that here the
dual code does not have to be self-orthogonal. For a cyclic $[30,25,4]$
code with $\varrho(x)=(1+x)^{2}(1+\omega x)(1+x+\omega x^{2})$ we
obtain a quantum code $[[60,40,4]]$. \end{example}

\begin{example}A CSS family of odd-distance rotated toric codes \cite{Kovalev:ISIT2012}
is obtained for $f_{1}(x)=(1+x^{2t^{2}+1})$ and $f_{2}(x)=x(1+x^{2t^{2}-1})$,
$t=1,2,\ldots$ by construction in Eq.~(\ref{eq:Bicycle-1}). These
codes have the parameters $[[2t^{2}+2(t+1)^{2},2,2t+1]]$. Explicitly,
$[[10,2,3]]$, $[[26,2,5]]$, $[[50,2,7]]$, $[[82,2,9]]$, \ldots{}.
\end{example}

The constructions in Theorems~\ref{th:two-sublattice-theorem} and
\ref{th:tiled4} {[}Eqs.~(\ref{eq:two-sublattice-CSS}) and (\ref{eq:Bicycle-1})
respectively{]} coincide for symmetric matrices, $A=A^{T}$, $B=B^{T}$.
Then, starting with two palindromic polynomials $f_{i}(x)=x^{\deg f_{i}(x)}f_{i}(1/x)$,
$i=1,2$, we can obtain non-CSS halved hyperbicycle codes in Eq.~(\ref{eq:two-sublattice})
by applying the reverse of Theorem \ref{th:two-sublattice-theorem}
to the matrices $A$ and $B$ \cite{Kovalev:ISIT2012}.

\begin{example}A non-CSS family of smallest odd-distance rotated
toric codes \cite{Kovalev-PRA2011} is obtained for palindromic $f_{1}(x)=x^{t}(1+x^{2t^{2}+1})$
and $f_{2}(x)=x^{t+1}(1+x^{2t^{2}-1})$, $t=1,2,\ldots$ by construction
in Eq.~(\ref{eq:two-sublattice}). These codes have the parameters
$[[t^{2}+(t+1)^{2},1,2t+1]]$. Explicitly, $[[5,1,3]]$, $[[13,1,5]]$,
$[[25,1,7]]$, $[[41,1,9]]$, \ldots{}. \end{example} The codes
in the last two examples exceed the lower bound in Theorem \ref{th:tiled4}
due to degeneracy.

\begin{table*}[t]
\begin{tabular}{|c|c|}
\hline 
Code $1$.  & $\begin{array}{l}
A=\mathcal{H}_{1}\otimes E\otimes E+E\otimes\mathcal{H}_{1}\otimes E+E\otimes E\otimes\mathcal{H}_{1}\\
B=\mathcal{H}_{1}\otimes\mathcal{H}_{1}\otimes E+E\otimes\mathcal{H}_{1}\otimes\mathcal{H}_{1}+\mathcal{H}_{1}\otimes E\otimes\mathcal{H}_{1}
\end{array}$\tabularnewline
\hline 
\hline 
Code $2$.  & $\begin{array}{l}
A=\mathcal{H}_{1}\otimes E\otimes E+E\otimes\mathcal{H}_{1}\otimes E+E\otimes\mathcal{H}_{1}\otimes\mathcal{H}_{1}+\mathcal{H}_{1}\otimes E\otimes\mathcal{H}_{1}+\mathcal{H}_{1}\otimes\mathcal{H}_{1}\otimes\mathcal{H}_{1}\\
B=E\otimes E\otimes\mathcal{H}_{1}+\mathcal{H}_{1}\otimes\mathcal{H}_{1}\otimes E+E\otimes\mathcal{H}_{1}\otimes\mathcal{H}_{1}+\mathcal{H}_{1}\otimes E\otimes\mathcal{H}_{1}
\end{array}$\tabularnewline
\hline 
Code $3$.  & $\begin{array}{l}
A=\mathcal{H}_{1}\otimes E\otimes E+E\otimes\mathcal{H}_{1}\otimes E+E\otimes\mathcal{H}_{1}\otimes\mathcal{H}_{1}+\mathcal{H}_{1}\otimes E\otimes\mathcal{H}_{1}+\mathcal{H}_{1}\otimes\mathcal{H}_{1}\otimes\mathcal{H}_{1}\\
B=\mathcal{H}_{1}\otimes E\otimes E+E\otimes\mathcal{H}_{1}\otimes E+E\otimes E\otimes\mathcal{H}_{1}+\mathcal{H}_{1}\otimes\mathcal{H}_{1}\otimes E+E\otimes\mathcal{H}_{1}\otimes\mathcal{H}_{1}+\mathcal{H}_{1}\otimes\mathcal{H}_{1}\otimes\mathcal{H}_{1}
\end{array}$\tabularnewline
\hline 
Code $4$.  & $\begin{array}{l}
A=E\otimes\mathcal{H}_{1}\otimes\mathcal{H}_{1}+\mathcal{H}_{1}\otimes E\otimes\mathcal{H}_{1}+E\otimes\mathcal{H}_{1}\otimes E\\
B=\mathcal{H}_{1}\otimes E\otimes E+E\otimes E\otimes\mathcal{H}_{1}+E\otimes\mathcal{H}_{1}\otimes\mathcal{H}_{1}+\mathcal{H}_{1}\otimes\mathcal{H}_{1}\otimes\mathcal{H}_{1}
\end{array}$\tabularnewline
\hline 
\end{tabular}\caption{Tensor-product two-sublattice representation of Haah's codes corresponding
to Eq.~(\ref{eq:two-sublattice-general}) where $\mathcal{H}_{1}$
is a circulant matrix corresponding to parity check polynomial $p(x)=1+x$
of a repetition code, $E$ is a unit matrix of the same dimensions
with $\mathcal{H}_{1}$ and the summation is $\mod2$.}

\label{tab:table} 
\end{table*}

\subsection{Tensor-product constructions and Haah's codes}

Further generalization of the bicycle-like construction in Eq.~(\ref{eq:Bicycle-1})
can be achieved by combining tensor products with commuting (e.g.,
circulant) matrices. The most general form of two-sublattice tensor-product
codes has the form: 
\[
\begin{array}{c}
A=\sum_{i_{1}\ldots i_{k}}\mathcal{H}_{i_{1}}^{A}\otimes\ldots\otimes\mathcal{H}_{i_{k}}^{A},\\
B=\sum_{i_{1}\ldots i_{k}}\mathcal{H}_{i_{1}}^{B}\otimes\ldots\otimes\mathcal{H}_{i_{k}}^{B},
\end{array}
\]
where $\mathcal{H}_{i_{l}}^{A}$ and $\mathcal{H}_{i_{l}}^{B}$ are
matching, pairwise-commuting binary square matrices
(i.e.,
$\mathcal{H}_{i_{l}}^{A}\mathcal{H}_{j_{l}}^{B}
+\mathcal{H}_{j_{l}}^{B}\mathcal{H}_{i_{l}}^{A}=0$). For circulant
matrices $\mathcal{H}_{i_{l}}^{A}$ and 
$\mathcal{H}_{j_{l}}^{B}$ the commutativity is automatically
satisfied.

Several examples of such codes are given by the Haah's
codes\cite{Haah:PRA2011}, which give examples of local codes in 3D
without string logical operators, and may lead to realizations of
self-correcting quantum memories.  Such codes are defined on two
sublattices and have exactly the tensor-product structure discussed
here. In Table \ref{tab:table}, we list four codes presented in
Ref.~\cite{Haah:PRA2011}. These codes are essentially constructed from
a repetition code and it is straightforward to generalize this
construction to arbitrary cyclic binary codes by using the
corresponding binary circulant matrix $\mathcal{H}_{1}$. The
commutativity of matrices $A$ and $B$ in
Eq.~(\ref{eq:two-sublattice-general}) immediately follows.

A non-CSS generalization of construction in Table \ref{tab:table}
can be achieved by using symmetric circulant matrices {[}see non-CSS
construction in Eq.~(\ref{eq:two-sublattice}){]}.

\section{CSS and non-CSS hyperbicycle codes}

This section contains our most important results. We show that the
families of hypergraph-product and generalized bicycle codes can be
obtained as limiting cases of a larger family of \emph{hyperbicycle}
codes. The main advantage of this construction is that it gives a
number of previously unreported families of quantum codes with tight
bounds on, or even explicitly known distance. This includes many strongly
degenerate LDPC codes with the distance much greater than the maximum
weight of a stabilizer generator. In this section we discuss the construction
of such codes, their parameters, and give examples.

\subsection{CSS hyperbicycle codes: construction}

We define the hyperbicycle CSS codes as follows: 
\begin{equation}
\begin{array}{c}
{\displaystyle G_{X}={\displaystyle \Bigl(E_{b}\otimes{\textstyle \sum}_{i}I_{i}^{(\chi)}\otimes a_{i},\;{\textstyle \sum}_{i}b_{i}\otimes I_{i}^{(\chi)}\otimes E_{a}\Bigr)},}\\
{\displaystyle G_{Z}=\left({\textstyle \sum}_{i}b_{i}^{T}\otimes\widetilde{I}_{i}^{(\chi)}\otimes\widetilde{E}_{a},\;\widetilde{E}_{b}\otimes{\textstyle \sum_{i}}\widetilde{I}_{i}^{(\chi)}\otimes a_{i}^{T}\right).}
\end{array}\label{eq:Hyperbicycle}
\end{equation}
Here we introduce two sets of binary matrices $a_{i}$ (dimensions
$r_{1}\times n_{1}$, $i=0,\ldots,c-1$) and $b_{i}$ (dimensions
$r_{2}\times n_{2}$, $i=0,\ldots,c-1$); $E_{a}$, $E_{b}$, $\widetilde{E}_{a}$
and $\widetilde{E}_{b}$ are unit matrices of dimensions given by
$r_{1}$, $r_{2}$, $n_{1}$ and $n_{2}$, respectively. Matrices
$I_{i}^{(\chi)}$ ($\widetilde{I}_{i}^{(\chi)}$) are permutation
matrices given by a product of two permutation matrices, i.e. $I_{i}^{(\chi)}=S_{\chi}I_{i}$
($\widetilde{I}_{i}^{(\chi)}=S_{\chi}^{T}I_{i}^{T}$) where $(I_{i})_{kj}=\delta_{j-k,i\mod c}$
is a circulant permutation matrix, $(S_{\chi})_{kj}=\delta_{j-k,(k-1)(\chi-1)\mod c}$
and the positive integers $c$ and $\chi$ are coprime . A version
of this construction for $c=2$ and $\chi=1$ has been previously
reported by us in Ref.~\onlinecite{Kovalev:ISIT2012}.

The matrices $G_{X}$ and $G_{Z}$, respectively, have $cr_{1}r_{2}$
and $cn_{1}n_{2}$ rows (not all of the rows are linearly independent),
and they both have 
\begin{equation}
N\equiv c(r_{1}n_{2}+r_{2}n_{1})\label{eq:block-size}
\end{equation}
columns, which gives the block length of the quantum code. The commutativity
condition $G_{X}G_{Z}^{T}=0$ is obviously satisfied by Eq.~(\ref{eq:Hyperbicycle})
since the permutation matrices commute with each other. Note that
for $c=1$ and $\chi=1$ we recover the hypergraph-product codes in
Eq.~(\ref{eq:Till}) and for $r_{i}=n_{i}=1$, $i=1,2$, (i.e., $a_{i}$
and $b_{i}$ given by binary numbers) we recover the generalized bicycle
code construction in Eq.~(\ref{eq:Bicycle-1}).

In order to characterize codes in Eq.~(\ref{eq:Hyperbicycle}) it
is convenient to introduce the ``tiled'' binary matrices: 
\begin{equation}
\begin{array}{c}
\mathcal{H}_{1}={\textstyle \sum}_{i}I_{i}^{(\chi)}\otimes a_{i},\quad\mathcal{H}_{2}={\textstyle \sum}_{i}b_{i}\otimes I_{i}^{(\chi)},\\
\widetilde{\mathcal{H}}_{1}={\textstyle \sum}_{i}\widetilde{I}_{i}^{(\chi)}\otimes a_{i}^{T},\quad\widetilde{\mathcal{H}}_{2}={\textstyle \sum}_{i}b_{i}^{T}\otimes\widetilde{I}_{i}^{(\chi)}.
\end{array}\label{eq:check-tiled}
\end{equation}
For example, for $c=5$ and $\chi=2$ we have
\begin{equation}
\mathcal{H}_{1}=\left(\begin{array}{ccccc}
a_{1} & a_{2} & a_{3} & a_{4} & a_{5}\\
a_{4} & a_{5} & a_{1} & a_{2} & a_{3}\\
a_{2} & a_{3} & a_{4} & a_{5} & a_{1}\\
a_{5} & a_{1} & a_{2} & a_{3} & a_{4}\\
a_{3} & a_{4} & a_{5} & a_{1} & a_{2}
\end{array}\right);\label{eq:form}
\end{equation}
note that the subsequent block rows are shifted by $\chi=2$ positions.

For the following discussion it is also useful to define auxiliary
binary matrices:
\begin{equation}
\begin{array}{c}
\mathcal{H}_{1}^{0}={\textstyle \sum}_{i}I_{i}\otimes a_{i},\quad\mathcal{H}_{2}^{0}={\textstyle \sum}_{i}b_{i}\otimes I_{i},\end{array}\label{eq:check-tiled-1}
\end{equation}
where in terms of these matrices we can write:
\begin{equation}
\begin{array}{c}
\mathcal{H}_{1}=S_{\chi}\otimes E_{a}\cdot\mathcal{H}_{1}^{0},\quad\mathcal{H}_{2}=E_{b}\otimes S_{\chi}\cdot\mathcal{H}_{2}^{0},\\
\widetilde{\mathcal{H}}_{1}=S_{\chi}^{T}\otimes E_{a}\cdot\mathcal{H}_{1}^{0T},\quad\widetilde{\mathcal{H}}_{2}=E_{b}\otimes S_{\chi}^{T}\cdot\mathcal{H}_{2}^{0T}.
\end{array}\label{eq:check-tiled-2}
\end{equation}
With this notation, it is clear that the generator matrices (\ref{eq:Hyperbicycle})
correspond to the hypergraph generators (\ref{eq:Till}), except that
the identity matrices $E_{i}$, $\widetilde{E}_{i}$ are now reduced
in size by a factor of $1/c$. The block length of the original hypergraph
code with the present binary matrices~(\ref{eq:check-tiled}) is
\begin{equation}
N_{\mathrm{orig}}=c^{2}(r_{1}n_{2}+r_{2}n_{1}).\label{eq:N-orig}
\end{equation}
In this sense, one can view the codes defined by Eq.~(\ref{eq:Hyperbicycle})
with $c>1$ as reduced hypergraph codes.

\subsection{CSS hyperbicycle codes: dimension}

Just as for the hypergraph codes, the parameters of classical codes
$\mathcal{C}_{\mathcal{H}_{i}}$ and $\mathcal{C}_{\widetilde{\mathcal{H}}_{i}}$
with parity check matrices $\mathcal{H}_{i}$ and $\widetilde{\mathcal{H}}_{i}$,
respectively, contain information about the parameters of the quantum
code in Eq.~(\ref{eq:Hyperbicycle}). We denote the distances of
these binary codes as $d_{i}$ and $\widetilde{d}_{i}$, and their
dimensions as $k_{i}$ and $\widetilde{k}_{i}$, $i=1,2$.

Regardless of the choice of the matrices $a_{i}$, $b_{i}$, the codes
$\mathcal{C}_{\mathcal{H}_{i}}$ and $\mathcal{C}_{\widetilde{\mathcal{H}}_{i}}$
are quasicyclic, with the cycle of length equal the dimension $c$
of the cyclic permutation matrices $I_{\chi i}$ ($\widetilde{I}_{\chi i}$),
$i=1,2$. Indeed, the corresponding block shifts merely lead to permutations
of rows of the check matrices $\mathcal{H}_{i}$,
$\widetilde{\mathcal{H}}_{i}$ [see Eq.~(\ref{eq:form})].
In order to define the dimension of the corresponding hyperbicycle
codes with generators~(\ref{eq:Hyperbicycle}), we first classify
the vectors in $\mathcal{C}_{\mathcal{H}_{i}}$ and $\mathcal{C}_{\widetilde{\mathcal{H}}_{i}}$
with respect to this circulant symmetry.

We start with the case of a binary cyclic code with block length $c$,
with the generator polynomial $g(x)$, which divides $x^{c}-1$. {[}Polynomial
algebra in this section is done modulo 2.{]} Any codeword corresponds
to a polynomial $w(x)$ which contains $g(x)$ as a factor, and, therefore,
every cyclotomic root of $g(x)$ is also a root of $w(x)$. However,
the particular polynomial $w(x)=g(x)f(x)$ may also contain other
factors of $x^{c}-1$ and thus have symmetry different from that of
$g(x)$. We can define a linear space ${\cal C}^{(p)}$ of length-$c$
vectors corresponding to $w(x)$ with the exact symmetry of $g(x)$
where $p(x)\equiv(x^{c}-1)/g(x)$ by defining the equivalence $w_{1}(x)\equiv w_{2}(x)$
for a given $g(x)$ as $f_{1}=f_{2}\mod p'(x)$ for all $p'(x)$ such
that $p'(x)\neq p(x)$ is a factor of $p(x)$. The same equivalence
can be also defined modulo greatest common divisor (gcd) of all such
polynomials $p'(x)$. In terms of the corresponding check polynomial
$p(x)$, the dimension $k_{0}^{(p)}$ of thus defined space ${\cal C}^{(p)}$
is zero unless $p(x)$ is a non-zero power of an irreducible polynomial
$p_{\alpha}(x)$, in which case $k_{0}^{(p)}=\deg p_{\alpha}(x)$.

For the quasicyclic code $\mathcal{C}_{\mathcal{H}_{1}}$ with the
first check matrix in Eq.~(\ref{eq:check-tiled}), the vector $\mathbf{w}$
is in the symmetry class of $p(x)$, where $p(x)$ divides $x^{c}-1$,
if $\mathbf{w}$ satisfies the condition $[p(I_{1})\otimes E_{a}]\,\mathbf{w}=0$
and is not a member of such a symmetry class of any factor of $p(x)$.
For each polynomial $p(x)$, the l.h.s.\ in these equations is a
sum of cyclic shifts of the vector $\mathbf{w}$ corresponding to
each non-zero coefficient of $p(x)$. We denote the dimension of the
subcode of $\mathcal{C}_{\mathcal{H}_{1}}$ with all vectors in the
symmetry class of $p(x)$ as $k_{1}^{(p)}$. The symmetry implies
that $k_{1}^{(p)}$ must contain the dimension $k_{0}^{(p)}$ introduced
in the previous paragraph as a factor, and, in particular, $k_{1}^{(p)}$
must be zero whenever $k_{0}^{(p)}$ is zero. A convenient basis of
${\cal C}_{{\cal H}_{1}}^{(p)}$ can be constructed using the following
\begin{statement}\label{th:lemma-H1} Any vector of the subcode ${\cal C}_{{\cal H}_{1}}^{(p)}$
can be chosen in the form 
\begin{equation}
\mathbf{w}=\sum_{i=0}^{k_{0}^{(p)}-1}I_{i}\mathbf{g}\otimes\boldsymbol{\alpha}_{i},\label{eq:h1-expansion}
\end{equation}
where the vector $\mathbf{g}$ corresponds to the generating polynomial
$g(x)\equiv(x^{c}-1)/p(x)$; the vectors $(I_{s}\otimes E_{a})\mathbf{w}$,
$0\le s<k_{0}^{(p)}$, are linearly independent. \end{statement}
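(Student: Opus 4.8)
The plan is to pass to the polynomial/ring picture used implicitly throughout this section. Identify the ambient space $\mathbb{F}_2^{c}\otimes\mathbb{F}_2^{n_1}$ of $\mathbf{w}$ with $R^{n_1}$, where $R=\mathbb{F}_2[x]/(x^{c}-1)$ and the circulant $I_1\otimes E_a$ acts as multiplication by $x$; under this identification $\mathbf{w}$ becomes a tuple $w(x)=(w^{(1)}(x),\ldots,w^{(n_1)}(x))$ of residues modulo $x^{c}-1$, $I_i\mathbf{g}$ corresponds to $x^{i}g(x)$, and the block shift $I_s\otimes E_a$ to multiplication by $x^{s}$. Since $S_{\chi}\otimes E_a$ is an invertible permutation matrix, Eq.~(\ref{eq:check-tiled-2}) gives $\ker\mathcal{H}_1=\ker\mathcal{H}_1^{0}$, so we may work with $\mathcal{H}_1^{0}$ and the plain circulant $I_1$. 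The only algebraic input is that the annihilator of $p(x)$ in $R$ is the principal ideal generated by $g(x)\equiv(x^{c}-1)/p(x)$; equivalently, $p(x)f(x)\equiv 0\pmod{x^{c}-1}$ iff $g(x)\mid f(x)$.

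I would first establish the form~(\ref{eq:h1-expansion}). If $p(x)$ is not a power of an irreducible polynomial then $k_0^{(p)}=0$ and, as recalled above, ${\cal C}_{{\cal H}_1}^{(p)}$ is trivial, so the statement holds vacuously; assume therefore $p(x)=p_{\alpha}(x)^{m}$ with $p_{\alpha}$ irreducible, and set $\delta=\deg p_{\alpha}$, so that $k_0^{(p)}=\delta$. Componentwise, $[p(I_1)\otimes E_a]\mathbf{w}=0$ says $p(x)w^{(l)}(x)\equiv 0\pmod{x^{c}-1}$, i.e.\ $g(x)\mid w^{(l)}(x)$; hence $w(x)=g(x)v(x)$ for some $v(x)\in\mathbb{F}_2[x]^{n_1}$, with each $v^{(l)}(x)$ determined only modulo $p(x)$, so we may take $\deg v^{(l)}<\deg p$. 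Such a vector lies in the symmetry class of some \emph{proper} factor of $p$ precisely when it is annihilated by $p_{\alpha}^{m-1}(I_1)\otimes E_a$, which by the same computation means $g(x)p_{\alpha}(x)\mid w^{(l)}(x)$, i.e.\ $v=p_{\alpha}v'$; replacing each $v^{(l)}$ by its remainder modulo $p_{\alpha}$ changes $\mathbf{w}$ only by such a vector, so we may in fact take $\deg v^{(l)}<\delta$, i.e.\ $v(x)=\sum_{i=0}^{\delta-1}\boldsymbol{\alpha}_{i}x^{i}$ with $\boldsymbol{\alpha}_{i}\in\mathbb{F}_2^{n_1}$. Then $w(x)=g(x)v(x)=\sum_{i=0}^{\delta-1}x^{i}g(x)\,\boldsymbol{\alpha}_{i}$, which translated back through the identification is exactly Eq.~(\ref{eq:h1-expansion}). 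Only membership in the symmetry class of $p$ is used here, not $\mathcal{H}_1\mathbf{w}=0$; imposing the latter merely constrains the $\boldsymbol{\alpha}_{i}$ further.

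For the linear independence, let $\mathbf{w}\neq 0$, so $v(x)\neq 0$; writing its components as $\alpha^{(l)}(x)$ (of degree $<\delta$), suppose $\sum_{s=0}^{\delta-1}\lambda_{s}(I_s\otimes E_a)\mathbf{w}=0$ with $\lambda_{s}\in\mathbb{F}_2$. With $\lambda(x)=\sum_{s}\lambda_{s}x^{s}$ the ring picture turns this into $\lambda(x)g(x)\alpha^{(l)}(x)\equiv 0\pmod{x^{c}-1}$, i.e.\ $p_{\alpha}^{m}\mid\lambda\,\alpha^{(l)}$, for every $l$. If $\lambda\neq 0$ then $0\le\deg\lambda<\delta=\deg p_{\alpha}$ forces $\gcd(\lambda,p_{\alpha})=1$, hence $p_{\alpha}^{m}\mid\alpha^{(l)}$; but $\deg\alpha^{(l)}<\delta\le m\delta$, so $\alpha^{(l)}=0$ for all $l$, i.e.\ $v=0$, a contradiction. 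Hence $\lambda=0$ and the $\delta$ shifts are linearly independent; moreover each of them again lies in ${\cal C}_{{\cal H}_1}^{(p)}$, since $I_s$ commutes with $p(I_1)$ and with $p_{\alpha}^{m-1}(I_1)$ and, by the quasicyclicity of $\mathcal{C}_{\mathcal{H}_1}$ noted above, $I_s\otimes E_a$ preserves $\mathcal{C}_{\mathcal{H}_1}$. The step I expect to need the most care is the bookkeeping in the second paragraph --- rendering the informal ``symmetry class of $p(x)$'' faithfully in the module picture, and in particular the repeated-factor case $m>1$ (which already arises when $c$ is even), so that passing to the quotient by the maximal proper factor genuinely leaves a representative of degree less than $\delta$ in each component.
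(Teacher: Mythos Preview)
Your proof is correct and takes essentially the same approach as the paper: decompose $\mathbf{w}$ componentwise over the $n_1$ coordinates so that each component lies in $\mathcal{C}^{(p)}$, then express each component via the $k_0^{(p)}$ shifts of $\mathbf{g}$ and regroup. Your version is considerably more explicit --- working in $R=\mathbb{F}_2[x]/(x^{c}-1)$, handling the reduction modulo $p_{\alpha}$ carefully in the repeated-root case $m>1$, and actually proving the linear-independence claim that the paper dispatches in one clause (``follows from the symmetry of the vectors $\boldsymbol{\omega}_i^{(p)}\in\mathcal{C}^{(p)}$'') --- but the underlying decomposition is identical.
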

\begin{proof} Any vector of the subcode ${\cal C}_{{\cal H}_{1}}^{(p)}$
can be expanded in the form $\boldsymbol{\omega}_{i}^{(p)}\otimes\mathbf{e}_{i}$,
where $\mathbf{e}_{i}$ are all distinct weight-one vectors, and $\boldsymbol{\omega}_{i}^{(p)}$
are vectors from ${\cal C}^{(p)}$. Generally, any vector $\boldsymbol{\omega}\in{\cal C}^{(p)}$
can be written as a sum of shifts of the vector $\mathbf{g}$, $\sum_{s=0}^{k_{0}^{(p)}-1}I_{s}\mathbf{g}$;
we obtain Eq.~(\ref{eq:h1-expansion}) by rearranging the summations.
Linear independence follows from the symmetry of the vectors $\boldsymbol{\omega}_{i}^{(p)}\in{\cal C}^{(p)}$.
\end{proof}

Note that, in addition to the symmetric vectors, the code $\mathcal{C}_{\mathcal{H}_{1}}$
may contain vectors with no special symmetry with respect to the discussed
block shifts. We will formally assign these to the check polynomial
$p(x)=x^{c}-1$, and define $k_{0}^{(x^{c}-1)}\equiv1$.

For the vectors of the code $\mathcal{C}_{\mathcal{H}_{2}}$ with
the second check matrix in Eq.~(\ref{eq:check-tiled}), the condition
to be in the symmetry class of $p(x)$ reads $[E_{b}\otimes p(I_{1})]\mathbf{w}=0$
while $[E_{b}\otimes p'(I_{1})]\mathbf{w}\neq0$ for all $p'(x)\neq p(x)$
that divide $p(x)$. We denote the dimension of the corresponding
subcode as $k_{2}^{(p)}$. The same classification can be done for
codes $\mathcal{C}_{\widetilde{\mathcal{H}}_{i}}$ with the transposed
check matrices; the corresponding dimensions are ${\widetilde{k}}_{i}^{(p)}$,
$i=1,2$.


The introduced symmetry classification is in the heart of the following
\begin{statement}\label{th:lemma-0} A vector $\boldsymbol{\upsilon}$
that belongs to both $\mathcal{C}_{E_{b}\otimes\mathcal{H}_{1}}$
and $\mathcal{C}_{\mathcal{H}_{2}\otimes E_{a}}$ must be in the same
symmetry class $p(x)$ with respect to both codes ${\cal H}_{1}$
and ${\cal H}_{2}$, including the no-symmetry case $p(x)=x^{c}-1$.
Any such vector can be generally expanded in terms of 
\begin{equation}
\boldsymbol{\upsilon}_{\alpha,\beta}=\sum_{i,j=0}^{k_{0}^{(p)}-1}\boldsymbol{\beta}_{i}\otimes I_{i+j}\mathbf{g}\otimes\boldsymbol{\alpha}_{j},\label{eq:h1h2-expansion}
\end{equation}
where $\sum_{i}\boldsymbol{\beta}_{i}\otimes I_{i}\mathbf{g}\in{\cal C}_{{\cal H}_{2}}^{(p)}$
and $\sum_{i}I_{i}\mathbf{g}\otimes\boldsymbol{\alpha}_{i}\in{\cal C}_{{\cal H}_{1}}^{(p)}$
and $\mathbf{g}$ corresponds to the polynomial $g(x)\equiv(x^{c}-1)/p(x)$.
\end{statement}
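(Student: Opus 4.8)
The plan is to view $\boldsymbol{\upsilon}$ as an element of $\mathbb{F}_2^{n_2}\otimes\mathbb{F}_2^{c}\otimes\mathbb{F}_2^{n_1}$, the common domain of $E_b\otimes\mathcal{H}_1$ and $\mathcal{H}_2\otimes E_a$, and to exploit that the \emph{middle} factor $\mathbb{F}_2^{c}$ is shared by the two codes and carries the single circulant shift $I_1$ out of which the whole symmetry classification is built. Since $\mathcal{H}_1=(S_\chi\otimes E_a)\mathcal{H}_1^0$ and $\mathcal{H}_2=(E_b\otimes S_\chi)\mathcal{H}_2^0$ with $S_\chi$ invertible [Eq.~(\ref{eq:check-tiled-2})], the null spaces, hence the whole statement, are unchanged on replacing $\mathcal{H}_i$ by the matrices $\mathcal{H}_i^0$ of Eq.~(\ref{eq:check-tiled-1}), so $\chi$ is irrelevant here and one may use the plain circulants $I_i$. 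In these terms $\boldsymbol{\upsilon}\in\mathcal{C}_{E_b\otimes\mathcal{H}_1}$ says precisely that each slice obtained by fixing the left ($\mathbb{F}_2^{n_2}$) index lies in $\mathcal{C}_{\mathcal{H}_1}$, and $\boldsymbol{\upsilon}\in\mathcal{C}_{\mathcal{H}_2\otimes E_a}$ that each slice obtained by fixing the right ($\mathbb{F}_2^{n_1}$) index lies in $\mathcal{C}_{\mathcal{H}_2}$. I would then establish the two assertions in turn.

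For the common symmetry class: the condition that $\boldsymbol{\upsilon}$ sit in the class $p(x)$ reads, from the $\mathcal{H}_1$ side, $[E_b\otimes p(I_1)\otimes E_a]\boldsymbol{\upsilon}=0$ together with the failure of the same identity for any proper factor of $p$; from the $\mathcal{H}_2$ side it is the \emph{identical} equation, because either reading simply applies $p(I_1)$ to the shared middle factor with identities elsewhere. Moreover $p(I_1)$ -- and the idempotents that split $\mathbb{F}_2^{c}$ into its $I_1$-primary components -- commute with both $\mathcal{H}_1^0$ and $\mathcal{H}_2^0$ (all being built from pairwise-commuting circulants), so they preserve both null spaces and their intersection; applying them to $\boldsymbol{\upsilon}$ splits it into pieces that are each homogeneous of a single class and still lie in the intersection. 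For a homogeneous piece the class depends only on the middle factor, hence is the same whether one reads it off from the $\mathcal{H}_1$ slices or the $\mathcal{H}_2$ slices; the no-symmetry piece is the one assigned to $p(x)=x^c-1$ by convention. This also reduces the rest of the argument to the homogeneous case.

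For the explicit form: fix the common class $p(x)$, put $g(x)=(x^c-1)/p(x)$ and let $\mathbf{g}$ be its vector. Fixing the left index and applying Lemma~\ref{th:lemma-H1} to the resulting slice, which lies in $\mathcal{C}_{\mathcal{H}_1}^{(p)}$, writes it as $\sum_{j=0}^{k_0^{(p)}-1}I_j\mathbf{g}\otimes\boldsymbol{\alpha}_j$; fixing instead the right index and applying the $\mathcal{H}_2$-analog of Lemma~\ref{th:lemma-H1} writes that slice as $\sum_{i=0}^{k_0^{(p)}-1}\boldsymbol{\beta}_i\otimes I_i\mathbf{g}$. Both expansions express the middle component of $\boldsymbol{\upsilon}$ through the shifts $I_t\mathbf{g}$, $0\le t<k_0^{(p)}$, which are linearly independent -- equivalently, multiplication by $g(x)$ is injective on polynomials of degree below $\deg p_\alpha(x)=k_0^{(p)}$ -- so comparing the two expansions produces a single coefficient array $\gamma_{l,r}$ (of degree $<k_0^{(p)}$ in the cyclic variable) with $\sum_r g(x)\gamma_{l,r}(x)\otimes\mathbf{e}_r\in\mathcal{C}_{\mathcal{H}_1}^{(p)}$ for every left index $l$ and $\sum_l\mathbf{e}_l\otimes g(x)\gamma_{l,r}(x)\in\mathcal{C}_{\mathcal{H}_2}^{(p)}$ for every right index $r$. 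Decomposing the array $(\gamma_{l,r})$ as a sum of terms of product form in $l$ and $r$, each still meeting both constraints, exhibits $\boldsymbol{\upsilon}$ as a sum of finitely many blocks~(\ref{eq:h1h2-expansion}), each with $\boldsymbol{\alpha}$ satisfying $\sum_j I_j\mathbf{g}\otimes\boldsymbol{\alpha}_j\in\mathcal{C}_{\mathcal{H}_1}^{(p)}$ and $\boldsymbol{\beta}$ satisfying $\sum_i\boldsymbol{\beta}_i\otimes I_i\mathbf{g}\in\mathcal{C}_{\mathcal{H}_2}^{(p)}$. The converse inclusion is immediate: for such a block and a fixed left index $l$, its slice equals $\sum_i\beta_{i,l}$ times the $i$-fold cyclic shift in the middle factor of $\sum_j I_j\mathbf{g}\otimes\boldsymbol{\alpha}_j$, which lies in $\mathcal{C}_{\mathcal{H}_1}$ because that code (the null space of $\sum_k I_k\otimes a_k$) is invariant under cyclic shifts of its $\mathbb{F}_2^{c}$ factor, and symmetrically for the $\mathcal{H}_2$ slices.

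The step I expect to be the real work is the coefficient matching: the two fiberwise expansions are \emph{a priori} unrelated and must be forced into the single array $(\gamma_{l,r})$, and the bookkeeping of the products $I_{i+j}\mathbf{g}$ together with the algebra of $\mathbb{F}_2[x]/(x^c-1)$ -- in particular the prime-power structure of $p(x)$ when $x^c-1$ is not square-free -- is where the argument is most delicate.
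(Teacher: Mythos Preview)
Your approach is essentially the paper's, carried out in more detail. The paper's proof is very short: it writes the bare expansion $\boldsymbol{\upsilon}=\sum_{ij}\mathbf{e}_{i}^{2}\otimes\boldsymbol{\gamma}_{ij}\otimes\mathbf{e}_{j}^{1}$ (your coefficient array $\gamma_{l,r}$), observes that the class-$p(x)$ condition reads $[E_b\otimes p(I_1)\otimes E_a]\boldsymbol{\upsilon}=0$ identically from either side so that $\boldsymbol{\gamma}_{ij}\in\mathcal{C}^{(p)}$, and then simply asserts that the form~(\ref{eq:h1h2-expansion}) ``follows from Lemma~\ref{th:lemma-H1}.'' Your slicing-and-matching argument is exactly an unpacking of that last sentence, and the step you flag as the real work (reconciling the two fiberwise expansions into a single array and then decomposing it into product-form pieces compatible with both subcodes) is precisely what the paper leaves implicit; your caution about the prime-power case of $p(x)$ is also warranted, since the paper's own definition of $\mathcal{C}^{(p)}$ as a quotient is what makes that step go through.
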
 \begin{proof} The parameter $\chi$ does not enter
this discussion since it corresponds to permutations of rows in matrices
${\cal H}_{1}^{0}$ and ${\cal H}_{2}^{0}$. That the symmetry must
be the same becomes evident if we write the most general expansion
\begin{equation}
\boldsymbol{\upsilon}=\sum_{ij}\mathbf{e}_{i}^{2}\otimes\boldsymbol{\gamma}_{ij}\otimes\mathbf{e}_{j}^{1},\label{eq:expansion-h1h2-bare}
\end{equation}
where $\boldsymbol{\gamma}_{ij}$ are length-$c$ vectors and $\mathbf{e}_{i}^{1(2)}$
are distinct weight-1 vectors. Indeed, the condition to be in the
symmetry class of $p(x)$ is the same for both codes: $[E_{b}\otimes p(I_{1})\otimes E_{a}]\boldsymbol{\upsilon}=0$
for $p(x)$ itself but not for any of its factors; thus $\boldsymbol{\gamma}_{ij}$
must be in ${\cal C}^{(p)}$. The expansion~(\ref{eq:h1h2-expansion})
follows from Lemma~\ref{th:lemma-H1}. \end{proof}

We can now count linearly-independent rows in the generator matrices:
\begin{statement}\label{th:lemma-1} The numbers of linearly independent
rows in matrices~(\ref{eq:Hyperbicycle}) are 
\begin{eqnarray}
\rank G_{X} & = & r_{1}r_{2}c-\sum_{l}\widetilde{k}_{1}^{(p_{l})}\widetilde{k}_{2}^{(p_{l})}/k_{0}^{(p_{l})},\nonumber \\
\rank G_{Z} & = & n_{1}n_{2}c-\sum_{l}k_{1}^{(p_{l})}k_{2}^{(p_{l})}/k_{0}^{(p_{l})},\label{eq:rank-1}
\end{eqnarray}
where $p_{l}(x)$ are all binary factors of $x^{c}-1$ such that $k_{0}^{(p_{l})}\neq0$,
including $x^{c}-1$ itself. \end{statement}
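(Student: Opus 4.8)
The plan is to compute the two ranks by relating the kernels of $G_X$ and $G_Z$ to the kernels of the ``doubled'' tiled matrices that already appear in the hypergraph-product analysis, and then to decompose those kernels according to the circulant symmetry classification built up in Lemmas~\ref{th:lemma-H1} and~\ref{th:lemma-0}. First I would note that $\rank G_X = cr_1r_2 - \dim\ker G_X^T$ and similarly for $G_Z$, so it suffices to count the left null vectors of $G_X$ and of $G_Z$. Using the explicit block form~(\ref{eq:Hyperbicycle}) together with the rewriting~(\ref{eq:check-tiled})--(\ref{eq:check-tiled-2}), a left null vector of $G_Z$ (say) is a vector $\boldsymbol{\upsilon}$ annihilated by both $E_b\otimes\mathcal{H}_1$ and $\mathcal{H}_2\otimes E_a$ — exactly the situation of Lemma~\ref{th:lemma-0} — after the permutations $S_\chi$ are absorbed (which is harmless for rank, since $S_\chi$ and $I_i$ are invertible). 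So the kernel in question is $\mathcal{C}_{E_b\otimes\mathcal{H}_1}\cap\mathcal{C}_{\mathcal{H}_2\otimes E_a}$.

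Next I would stratify this intersection by symmetry class: by Lemma~\ref{th:lemma-0} every such vector lies in a single class $p(x)$ (with $p_l(x)\mid x^c-1$, $k_0^{(p_l)}\neq 0$, including $p=x^c-1$), so the intersection is the direct sum over $l$ of its symmetry-$p_l$ pieces, and it remains to compute the dimension of each piece. Fixing $p=p_l$ and writing $g=(x^c-1)/p$, the expansion~(\ref{eq:h1h2-expansion}) exhibits a piece-$p$ vector as built from a pair $\bigl(\sum_i\boldsymbol{\beta}_i\otimes I_i\mathbf{g},\ \sum_i I_i\mathbf{g}\otimes\boldsymbol{\alpha}_i\bigr)$ lying in $\mathcal{C}_{\mathcal{H}_2}^{(p)}\times\mathcal{C}_{\mathcal{H}_1}^{(p)}$, via the bilinear ``convolution'' $\boldsymbol{\upsilon}_{\alpha,\beta}=\sum_{i,j}\boldsymbol{\beta}_i\otimes I_{i+j}\mathbf{g}\otimes\boldsymbol{\alpha}_j$. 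The key claim is that this map, from $\mathcal{C}_{\mathcal{H}_2}^{(p)}\otimes_{?}\mathcal{C}_{\mathcal{H}_1}^{(p)}$ to the class-$p$ part of the intersection, is onto with a $k_0^{(p)}$-to-one redundancy in a precise sense, so that the resulting dimension is $k_1^{(p)}k_2^{(p)}/k_0^{(p)}$. Concretely: each of $\mathcal{C}_{\mathcal{H}_1}^{(p)}$ and $\mathcal{C}_{\mathcal{H}_2}^{(p)}$ is, by Lemma~\ref{th:lemma-H1}, a free module over the ring $R_p=\mathbb{F}_2[x]/(p(x))$ (which has dimension $k_0^{(p)}$ over $\mathbb{F}_2$, since $p$ is a prime power and $R_p$ is local with residue field $\mathbb{F}_{2^{\deg p_\alpha}}$), of ranks $k_1^{(p)}/k_0^{(p)}$ and $k_2^{(p)}/k_0^{(p)}$ respectively; the convolution is exactly the $R_p$-bilinear pairing into the tensor product over $R_p$, whose $\mathbb{F}_2$-dimension is $(k_1^{(p)}/k_0^{(p)})(k_2^{(p)}/k_0^{(p)})\cdot k_0^{(p)}=k_1^{(p)}k_2^{(p)}/k_0^{(p)}$. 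Summing over $l$ gives $\dim\ker G_Z^T=\sum_l k_1^{(p_l)}k_2^{(p_l)}/k_0^{(p_l)}$, hence $\rank G_Z=n_1n_2c-\sum_l k_1^{(p_l)}k_2^{(p_l)}/k_0^{(p_l)}$; the computation for $G_X$ is verbatim the same with $\mathcal{H}_i\to\widetilde{\mathcal{H}}_i$, $n_i\to r_i$, $k_i^{(p)}\to\widetilde k_i^{(p)}$.

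The main obstacle is the middle step — showing that the convolution map really computes the tensor product over $R_p$ and, in particular, that its image is all of the class-$p$ intersection and that no further linear relations occur beyond the $R_p$-module structure. One must check that the ``freeness over $R_p$'' in Lemma~\ref{th:lemma-H1} is genuine (i.e.\ the spanning vectors $(I_s\otimes E_a)\mathbf{w}$ for $0\le s<k_0^{(p)}$ are independent, which the lemma asserts) and then argue that a class-$p$ element of $\mathcal{C}_{E_b\otimes\mathcal{H}_1}\cap\mathcal{C}_{\mathcal{H}_2\otimes E_a}$, written in the bare form~(\ref{eq:expansion-h1h2-bare}) with $\boldsymbol{\gamma}_{ij}\in\mathcal{C}^{(p)}$, can always be re-expressed through~(\ref{eq:h1h2-expansion}); this amounts to the surjectivity of the natural map $\mathcal{C}_{\mathcal{H}_2}^{(p)}\otimes_{R_p}\mathcal{C}_{\mathcal{H}_1}^{(p)}\to(\text{class-}p\text{ part})$, which I would verify by a dimension count against the bare expansion combined with the commuting-kernel condition $[E_b\otimes p(I_1)\otimes E_a]\boldsymbol{\upsilon}=0$. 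Everything else — absorbing $S_\chi$, reducing $\ker G_X^T$ / $\ker G_Z^T$ to the Lemma~\ref{th:lemma-0} intersections, and the final arithmetic — is routine.
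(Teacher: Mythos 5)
Your proposal follows essentially the same route as the paper's proof: compute each rank by counting left null vectors, absorb the permutations $S_\chi$ (harmless for rank), identify the left kernel of $G_Z$ (resp.\ $G_X$) with the intersection treated in Lemma~\ref{th:lemma-0}, stratify by exact symmetry class $p_l(x)$, count $k_1^{(p_l)}k_2^{(p_l)}/k_0^{(p_l)}$ independent vectors per class via Lemmas~\ref{th:lemma-H1} and~\ref{th:lemma-0}, and sum --- the paper simply asserts this per-class count where you dress it up as a tensor-product/dimension argument. One minor imprecision: for $p=p_\alpha^m$ with $m>1$ the ring $\mathbb{F}_2[x]/(p)$ has $\mathbb{F}_2$-dimension $\deg p$, not $k_0^{(p)}=\deg p_\alpha$; the scalars genuinely acting on the exact-symmetry classes are the residue field $\mathbb{F}_{2^{k_0^{(p)}}}$ (equivalently, one works with the quotient space $\mathcal{C}^{(p)}$), which is what makes your arithmetic --- and the paper's statement that each pair $(\boldsymbol{\alpha},\boldsymbol{\beta})$ generates $k_0^{(p)}$ independent vectors --- come out right.
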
 \begin{proof} We first
count linearly-dependent rows in $G_{Z}$. Notice that the equations
$\boldsymbol{\upsilon}^{T}\cdot(E_{b}\otimes\widetilde{\mathcal{H}}_{1})=0$
and $\boldsymbol{\upsilon}^{T}\cdot(\widetilde{\mathcal{H}}_{2}\otimes E_{a})=0$
are both satisfied for $\boldsymbol{\upsilon}$ in Eq.~(\ref{eq:h1h2-expansion})
{[}Lemma~\ref{th:lemma-0}{]}. Each pair $(\boldsymbol{\alpha},\boldsymbol{\beta})$
generates $k_{0}^{(p)}$ linearly-independent vectors, same as each
of them generates for the corresponding subcodes ${\cal C}_{{\cal H}_{i}}^{(p)}$,
$i=1,2$, respectively. Thus there are exactly $k_{1}^{(p)}k_{2}^{(p)}/k_{0}^{(p)}$
linearly-independent vectors corresponding to every $p(x)$ with non-empty
${\cal C}^{(p)}$. Such vectors have to be complemented with the pairs
of vectors of no symmetry (if any) which formally correspond to $p(x)=x^{c}-1$
and $k_{0}^{(p)}=1$. According to Lemma \ref{th:lemma-0} these are
all possible solutions, which gives $\rank G_{Z}$ in Eq.~(\ref{eq:rank-1}).
We obtain $\rank G_{X}$ by substituting the parameters of the codes
with the parity check matrices $\widetilde{\mathcal{H}}_{1}$, $\widetilde{\mathcal{H}}_{2}$.
\end{proof} We finally obtain \begin{theorem}\label{th:tiled-size}
A quantum CSS code with generators~(\ref{eq:Hyperbicycle}) encodes
\begin{equation}
K=2\sum_{l}k_{1}^{(p_{l})}k_{2}^{(p_{l})}/k_{0}^{(p_{l})}-k_{1}s_{2}-k_{2}s_{1}\label{eq:parameters}
\end{equation}
qubits, where $p_{l}(x)$ are all binary factors of $x^{c}-1$ such
that $k_{0}^{(p_{l})}\neq0$, including $x^{c}-1$ itself, and $s_{i}=n_{i}-r_{i}$,
$i=1,2$. \end{theorem}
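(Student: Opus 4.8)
The plan is to feed the row ranks of Lemma~\ref{th:lemma-1} and the block length~(\ref{eq:block-size}) into the CSS dimension formula~(\ref{eq:rank-dimension-CSS}), $K=N-\rank G_X-\rank G_Z$. Writing $\Sigma\equiv\sum_l k_1^{(p_l)}k_2^{(p_l)}/k_0^{(p_l)}$ and $\widetilde\Sigma\equiv\sum_l\widetilde k_1^{(p_l)}\widetilde k_2^{(p_l)}/k_0^{(p_l)}$, Eqs.~(\ref{eq:rank-1}), (\ref{eq:block-size}) and (\ref{eq:rank-dimension-CSS}) give
\begin{equation}
K=c\,(r_1n_2+r_2n_1-r_1r_2-n_1n_2)+\Sigma+\widetilde\Sigma=-c\,s_1s_2+\Sigma+\widetilde\Sigma ,
\end{equation}
so the theorem is equivalent to the identity $\widetilde\Sigma=\Sigma-k_1s_2-k_2s_1+c\,s_1s_2$.

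To prove it I would use the symmetry resolution set up for Lemmas~\ref{th:lemma-H1}--\ref{th:lemma-1}. Only prime-power classes $p_l=p_\alpha^{\,j}$ carry $k_0^{(p_l)}\neq0$; moreover the formally adjoined no-symmetry class $p=x^c-1$ in fact carries $k_i^{(x^c-1)}=\widetilde k_i^{(x^c-1)}=0$, since summing $k_i^{(p_\alpha^{\,j})}$ over the prime-power classes already recovers $cn_i-\rank\mathcal H_i=k_i$ (and likewise $\widetilde k_i$, using $\rank\widetilde{\mathcal H}_i=\rank\mathcal H_i$). Hence $\Sigma$, $\widetilde\Sigma$ and $\sum_l k_i^{(p_l)}=k_i$ effectively range over prime-power classes, on which $\sum_l k_0^{(p_l)}=\sum_\alpha e_\alpha\deg p_\alpha=\deg(x^c-1)=c$ ($e_\alpha$ the multiplicity of $p_\alpha$ in $x^c-1$). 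Next, view $\mathcal H_i^{0}$ as an $r_i\times n_i$ matrix over the group algebra $\mathbb F_2[x]/(x^c-1)$; reducing it modulo $p_\alpha$ gives an $r_i\times n_i$ matrix over $\mathbb F_{2^{d_\alpha}}$, $d_\alpha=k_0^{(p_\alpha)}$, of some rank $\rho^{(i)}_\alpha$, and counting its kernel first over $\mathbb F_{2^{d_\alpha}}$ and then over $\mathbb F_2$ yields $k_i^{(p_\alpha)}=k_0^{(p_\alpha)}(n_i-\rho^{(i)}_\alpha)$. By Eq.~(\ref{eq:check-tiled-2}) each $\widetilde{\mathcal H}_i$ differs from $\mathcal H_i^{0T}$ only by a row permutation ($\chi$ being immaterial for dimensions, as in the proof of Lemma~\ref{th:lemma-0}), and transposition turns powers of $x$ into powers of $x^{-1}$, i.e.\ replaces $p_\alpha$ by its reciprocal $\bar p_\alpha(x)=x^{d_\alpha}p_\alpha(1/x)$ while swapping $n_i\leftrightarrow r_i$; so $\widetilde k_i^{(p_\alpha)}=k_0^{(p_\alpha)}(r_i-\rho^{(i)}_{\bar\alpha})$ and
\begin{equation}
\widetilde k_i^{(p_\alpha)}/k_0^{(p_\alpha)}=k_i^{(\bar p_\alpha)}/k_0^{(\bar p_\alpha)}-s_i ,\qquad i=1,2 .
\end{equation}
Substituting this into $\widetilde\Sigma$, relabeling by the reciprocation bijection $\alpha\mapsto\bar\alpha$, and expanding the product gives $\widetilde\Sigma=\Sigma-s_2\sum_l k_1^{(p_l)}-s_1\sum_l k_2^{(p_l)}+s_1s_2\sum_l k_0^{(p_l)}=\Sigma-k_1s_2-k_2s_1+c\,s_1s_2$, hence Eq.~(\ref{eq:parameters}). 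As a cross-check, each prime-power sector is an ordinary hypergraph-product code over $\mathbb F_{2^{d_\alpha}}$ built from the reduced check matrices, so it contributes $k_0^{(p_\alpha)}$ times the Tillich--Z\'emor formula $2\kappa_1^{(\alpha)}\kappa_2^{(\alpha)}-\kappa_1^{(\alpha)}s_2-\kappa_2^{(\alpha)}s_1$ with $\kappa_i^{(\alpha)}=n_i-\rho^{(i)}_\alpha$, which sums to the same answer.

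The main obstacle is the per-class analysis: setting up the reduction modulo the irreducible factors carefully, establishing the duality $\widetilde k_i^{(p_\alpha)}\leftrightarrow k_i^{(\bar p_\alpha)}$, and verifying that the no-symmetry ($p=x^c-1$) sector carries no code dimension so that the bookkeeping of Lemma~\ref{th:lemma-1} is consistent---in particular, for even $c$ the group algebra is not semisimple, and one must argue with ranks over the local rings $\mathbb F_2[x]/(p_\alpha^{\,j})$ rather than over a field. The concluding algebra is routine.
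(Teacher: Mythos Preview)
Your argument is correct and follows the same overall architecture as the paper: feed Lemma~\ref{th:lemma-1} into $K=N-\rank G_X-\rank G_Z$ and close the gap by relating $\widetilde k_i^{(p)}$ to $k_i^{(p)}$ class by class. The difference is only in that last step. The paper asserts the single identity
\[
k_i^{(p)}-\widetilde k_i^{(p)}=s_i\,k_0^{(p)}\qquad(i=1,2)
\]
directly, arguing that the restrictions of $\mathcal H_i$ and $\widetilde{\mathcal H}_i$ to the $p$-symmetric subspace are (up to row/column permutations) mutual transposes and therefore have equal rank; this finishes the proof in one line without any reciprocals or CRT. Your route unpacks the transpose through the group algebra $\mathbb F_2[x]/(x^c-1)$, which makes the inversion $x\mapsto x^{-1}$ visible and hence brings in the reciprocal pairing $p_\alpha\leftrightarrow\bar p_\alpha$; you then recover the same $\widetilde\Sigma=\Sigma-k_1s_2-k_2s_1+cs_1s_2$ by reindexing the sum under that bijection. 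What you gain is an explicit account of how transposition acts on the symmetry labels and a clearer separation of the semisimple (odd $c$) and non-semisimple (even $c$) cases; what the paper gains is brevity, since the reciprocation is absorbed into the phrase ``permutations of rows and columns.'' Your cross-check via the per-sector Tillich--Z\'emor formula is a nice consistency test but not needed for the proof.
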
 \begin{proof} The number of encoded qubits
$K$ can be deduced from Lemma \ref{th:lemma-1} using the relation
\begin{equation}
k_{i}^{(p)}-\widetilde{k}_{i}^{(p)}=s_{i}k_{0}^{(p)},\; i=1,2.\label{eq:subcode-transposed-dim}
\end{equation}
The latter follows from the fact that the rank of a matrix does not
change under transposition (and also under permutations
of rows and columns, e.g., as needed to transform $\mathcal{H}_{i}$ into
$\widetilde{\mathcal{H}}_{i}$). Specifically, restricting the action
of matrices $\mathcal{H}_{i}$ and $\widetilde{\mathcal{H}}_{i}$
to subspace $\mathcal{C}^{(p)}$, we obtain reduced mutually transposed
matrices of dimensions given by $r_{i}k_{0}^{(p)}\times n_{i}k_{0}^{(p)}$
and $n_{i}k_{0}^{(p)}\times r_{i}k_{0}^{(p)}$, which immediately
gives Eq.~(\ref{eq:subcode-transposed-dim}). \end{proof}

By construction, any $k_{i}^{(p)}$ may only be non-zero if the corresponding
$k_{i}>0$, $i=1,2$. Then, Eq.~(\ref{eq:parameters}) gives \begin{consequence}\label{th:zero-code}
A quantum CSS code with generators~(\ref{eq:Hyperbicycle}) can only
have $K>0$ if at least one of the binary codes with the parity check
matrices (\ref{eq:check-tiled}) is non-empty. \end{consequence}

\subsection{CSS hyperbicycle codes: general distance bounds}

\begin{theorem} \label{th:css-lower-bound-generic} The minimum distance
of the code with generators~(\ref{eq:Hyperbicycle}) satisfies the
lower bound 
\begin{equation}
D\geq\lfloor d/c\rfloor,\quad d\equiv\min(d_{1},d_{2},\widetilde{d}_{1},\widetilde{d}_{2}).\label{eq:css-lower-bound-generic}
\end{equation}
\end{theorem}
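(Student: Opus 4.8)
The plan is to mimic the distance lower-bound argument for hypergraph-product codes (Theorem 9 of Ref.~\cite{Tillich2009}), but to keep careful track of the block length reduction by the factor $c$. Recall that, by the discussion following Eq.~(\ref{eq:N-orig}), the generators (\ref{eq:Hyperbicycle}) are literally the hypergraph-product generators (\ref{eq:Till}) built from the tiled matrices $\mathcal{H}_1,\mathcal{H}_2$ of Eq.~(\ref{eq:check-tiled}), except that the identity matrices $E_i,\widetilde E_i$ have been shrunk by the factor $c$; equivalently, the hyperbicycle code sits inside the ``original'' hypergraph code of block length $N_{\mathrm{orig}}=c\,N$. So the first step is to set up the standard two-block decomposition: a $Z$-type logical operator is a vector $\mathbf{e}=(\mathbf{e}',\mathbf{e}'')$ with $\mathbf{e}'$ living on the $r_1 n_2$-type block (of size $c\,n_2 \cdot n_1 / c$... careful here) and $\mathbf{e}''$ on the $r_2 n_1$-type block, satisfying $G_X\mathbf{e}=0$ and not in the row space of $G_Z$, and symmetrically for $X$-type operators.

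Second, I would reproduce the Tillich--Z\'emor dichotomy: any nontrivial CSS logical word can be ``cleaned'' so that one of its two blocks is a nonzero codeword (or a nonzero vector modulo the appropriate code) of one of the four classical codes $\mathcal{C}_{\mathcal{H}_1},\mathcal{C}_{\mathcal{H}_2},\mathcal{C}_{\widetilde{\mathcal{H}}_1},\mathcal{C}_{\widetilde{\mathcal{H}}_2}$, hence has weight at least $\min(d_1,d_2,\widetilde d_1,\widetilde d_2)=d$ \emph{as a vector in the original hypergraph code}. The point where the factor $c$ enters: a single ``tiled'' row or column of $\mathcal{H}_1=\sum_i I_i^{(\chi)}\otimes a_i$ corresponds, after the shrinking of the identity blocks, to $c$ distinct columns of the hyperbicycle code being identified (folded) onto a single one. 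So a weight-$w$ vector of one of the classical codes, when pushed down to the hyperbicycle block length $N$, can have its support compressed by at most a factor $c$ under this folding; thus the induced logical operator on the hyperbicycle code has weight at least $\lceil w/c\rceil \ge \lfloor w/c\rfloor \ge \lfloor d/c\rfloor$. Making the folding map precise — identifying it with the natural projection $\mathbb{F}_2^{c\cdot m}\to\mathbb{F}_2^{m}$ that sums over the $c$ tiling copies, and checking that a nonzero logical stays nonzero under it (which uses that $c$ and $\chi$ are coprime, so $S_\chi$ is a genuine permutation and does not collapse the code) — is the technical heart.

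Third, I would also need to handle the degeneracy bookkeeping: ``not in the row space of $G_Z$'' for the hyperbicycle code must be related to ``not a stabilizer'' for the original hypergraph code, so that the weight bound $d$ from the classical codes is actually attained by a genuine logical operator and not killed by a stabilizer under the folding. Here Lemma~\ref{th:lemma-0} and the symmetry-class analysis behind Theorem~\ref{th:tiled-size} do the work: a hyperbicycle logical lifts to an original-code vector that is orthogonal to all the original stabilizers but outside their span, precisely because the folding respects the $p(x)$-symmetry decomposition. Then the classical distance bound applies verbatim to the lifted vector, and folding back costs the factor $c$.

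The main obstacle I expect is exactly the interface between the two pictures: proving rigorously that the ``fold by $c$'' operation (a) sends hyperbicycle logical operators to \emph{nonzero} vectors of the original hypergraph code that are still logically nontrivial there, and (b) contracts weights by no more than a factor of $c$. Step (b) is essentially a pigeonhole count — each hyperbicycle coordinate is the image of exactly $c$ original coordinates — so it is routine once the map is set up. Step (a), the non-collapse, is the delicate one; it is where coprimality of $c$ and $\chi$ and the structure of $S_\chi$ are genuinely used, and where one must invoke the symmetry-class results rather than argue naively. Once both are in hand, combining with the classical bound $w\ge d$ gives $D\ge\lfloor d/c\rfloor$ immediately.
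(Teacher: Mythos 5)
Your proposal hinges on a lifting/folding correspondence that is never actually established, and this is not a cosmetic omission — it is the whole proof. You need the claim that every logical operator $\mathbf{u}$ of the hyperbicycle code comes from (or lifts to) a vector of the unreduced hypergraph-product code of length $N_{\mathrm{orig}}=cN$ that is still orthogonal to $G_X^{\mathrm{orig}}$, still outside the row space of $G_Z^{\mathrm{orig}}$, and has weight at most $c\,\wgt(\mathbf{u})$; only then does Theorem~9 of Tillich--Z\'emor give $\wgt(\mathbf{u})\ge d/c$. Lemma~\ref{th:lemma-0} and the symmetry-class machinery do not supply this: they classify vectors lying simultaneously in $\mathcal{C}_{E_b\otimes\mathcal{H}_1}$ and $\mathcal{C}_{\mathcal{H}_2\otimes E_a}$ for the rank count of Lemma~\ref{th:lemma-1}, and say nothing about a quotient/cover relation between the two \emph{quantum} codes. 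Worse, the two natural candidate maps both fail in general: the ``repetition'' lift (place $c$ translated copies of $\mathbf{u}$ over the identified coordinates) has the right weight $c\,\wgt(\mathbf{u})$ but can land in the stabilizer of the original code — for quotients of the toric code with even $c$, the full preimage of a nontrivial cycle is $c$ parallel nontrivial cycles whose mod-$2$ sum is a product of plaquettes — while the weight-preserving ``section'' lift (support $\mathbf{u}$ at a single value of the folded index) generally violates the original $G_X^{\mathrm{orig}}$ constraints. Coprimality of $c$ and $\chi$ guarantees $S_\chi$ is a permutation but does not repair either map, so step (a), which you yourself flag as the delicate one, is a genuine gap rather than a technicality.

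For contrast, the paper's proof never passes through the unreduced code at all. It assumes $\wgt(\mathbf{u})<\lfloor d/c\rfloor$ with $G_X\mathbf{u}=0$ and punctures: keep only those columns of the $a_i$, $b_i$ that meet the support of $\mathbf{u}$, so each nonzero bit of $\mathbf{u}$ contributes at most $c$ columns to the reduced tiled matrices $\mathcal{H}_1'$, $\mathcal{H}_2'$. Any nonzero codeword of a reduced classical code would zero-pad to a codeword of the original classical code of weight below $d$, so the reduced classical codes are trivial; Consequence~\ref{th:zero-code} then forces the reduced quantum code to encode $K=0$ qubits, hence the reduced vector is a combination of rows of $G_Z'$, and since those rows are rows of $G_Z$ with zero columns deleted, $\mathbf{u}$ is a stabilizer; the $G_Z$ side is symmetric (this is where $\widetilde d_1,\widetilde d_2$ enter). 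If you want to salvage your route you would have to prove the non-collapse of the lift as a standalone lemma — and the toric-code example above shows it cannot hold in the naive form you state — whereas the support-restriction argument sidesteps the issue entirely.
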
 \begin{proof} Consider a vector $\mathbf{u}$ such
that $G_{X}\cdot\mathbf{u}=0$. We construct a reduced quantum code
in the form (\ref{eq:Hyperbicycle}), with the same $c$, by keeping
only those columns of the matrices $a_{i}$, $b_{i}$ that are involved
in the product $G_{X}\cdot\mathbf{u}$. This way, for every non-zero
bit of $\mathbf{u}$, one of the reduced matrices $\mat{H}_{1}'$,
$\mat{H}_{2}'$ {[}see Eq.~(\ref{eq:check-tiled}){]} may get $c$
columns, so that these matrices have no more than $c\wgt(\mathbf{u})$
columns. If we take $\wgt(\mathbf{u})<\lfloor d/c\rfloor$, according
to Consequence~\ref{th:zero-code}, the reduced code encodes no qubits,
thus the corresponding reduced $\mathbf{u}'$, $G_{X}'\cdot\mathbf{u}'=0$,
has to be a linear combination of the rows of $G_{Z}'$. The rows
of $G_{Z}'$ are a subset of those of $G_{Z}$, with some all-zero
columns removed; thus the full vector $\mathbf{u}$ is also a linear
combination of the rows of $G_{Z}$. Similarly, a vector $\mathbf{v}$
such that $G_{Z}\cdot\mathbf{v}=0$ and $\wgt(\mathbf{v})<\left\lfloor d/c\right\rfloor $,
is a linear combination of rows of $G_{X}$. \end{proof}

Let us introduce the minimum distances $d_{i}^{(p)}$ corresponding to
the subset of the vectors of the code $\mathcal{C}_{\mathcal{H}_{i}}$
which \emph{contain} one of the vectors with the exact symmetry of
$p(x)$,
\begin{equation}
  d_{i}^{(p)}=\min\{\wgt(\mathbf{a}+\mathbf{b})|
  \mathbf{0}\neq\mathbf{a}\in\mathcal{C}_{\mathcal{H}_{i}}^{(p)},\;
  \mathbf{b}\in\mathcal{C}_{\mathcal{H}_{i}}
  \setminus\mathcal{C}_{\mathcal{H}_{i}}^{(p)}\}.\label{eq:subset-distance} 
\end{equation}
Evidently, thus introduced distances satisfy 
\begin{equation}
d_{i}^{(p)}\ge d_{i},\quad\min_{l}d_{i}^{(p_{l})}=d_{i},\;
i=1,2;\label{eq:subset-distance-limits} 
\end{equation}
the minimum is taken over all $p_{l}(x)$ as in Theorem \ref{th:tiled-size}.
We will also introduce the distances $\widetilde{d}_{i}^{(p)}$ corresponding
to the matrices $\widetilde{\mathcal{H}}_{i}$, $i=1,2$.

The upper bound on the distance of the code with generators~(\ref{eq:Hyperbicycle})
is formulated in terms of thus introduced subset-distances ${d}_{i}^{(p)}$,
$\widetilde{d}_{i}^{(p)}$, $i=1,2$:

\begin{theorem} \label{th:upper-bound-partial} For every $p(x)$,
a binary factor of $x^{c}-1$ such that $k_{1}^{(p)}>0$ and $\widetilde{k}_{2}^{(p)}>0$,
the minimum distance $D$ of the code with generators~(\ref{eq:Hyperbicycle})
satisfies $D\le\min(d_{1}^{(p)},\widetilde{d}_{2}^{(p)})$. Similarly,
when $k_{2}^{(p)}>0$ and $\widetilde{k}_{1}^{(p)}>0$, we have $D\le\min(d_{2}^{(p)},\widetilde{d}_{1}^{(p)})$.
\end{theorem}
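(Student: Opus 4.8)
The plan is to mimic the structure of Lemma 10 of Ref.~\cite{Tillich2009} (the upper-bound argument for hypergraph-product codes), working within a fixed symmetry class $p(x)$. First I would fix a factor $p(x)$ of $x^c-1$ with $k_1^{(p)}>0$ and $\widetilde k_2^{(p)}>0$, let $g(x)=(x^c-1)/p(x)$ with corresponding vector $\mathbf g$, and pick a minimum-weight codeword structure realizing $d_1^{(p)}$: that is, a vector $\mathbf c=\mathbf a+\mathbf b\in\mathcal C_{\mathcal H_1}$ with $\mathbf 0\neq\mathbf a\in\mathcal C_{\mathcal H_1}^{(p)}$, $\mathbf b\in\mathcal C_{\mathcal H_1}\setminus\mathcal C_{\mathcal H_1}^{(p)}$, and $\wgt(\mathbf c)=d_1^{(p)}$. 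The goal is to build out of $\mathbf c$ an $X$-type logical operator of the hyperbicycle code of weight at most $d_1^{(p)}$; by the analogous construction with $\widetilde{\mathcal H}_2$ one gets a $Z$-type logical operator of weight at most $\widetilde d_2^{(p)}$, and $D$ is bounded by the smaller.

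The key step is choosing the right candidate vector. In the hypergraph-product case one takes a codeword $\mathbf c$ of one classical code, tensors it with a suitable standard basis vector on the other factor, pads with zeros on the complementary sublattice, and checks (i) it lies in $\ker G_Z$ (i.e.\ satisfies the symplectic/orthogonality condition against $G_Z$'s rows), and (ii) it is \emph{not} a linear combination of rows of $G_X$. Here, because of the tiled structure~(\ref{eq:check-tiled}), the natural candidate supported on the block $E_b\otimes\sum_i I_i^{(\chi)}\otimes a_i$ is of the form $\boldsymbol\beta\otimes\mathbf c$ for an appropriate length-$r_2$ vector $\boldsymbol\beta$ chosen so that the $\mathcal H_2$-part of $G_Z$ annihilates it; taking $\boldsymbol\beta$ a standard basis vector works since $\mathcal H_2^T$ acting there only needs the column of $\mathbf c$ to be in $\ker$ of the relevant piece. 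Condition (i) then follows from $\mathbf c\in\mathcal C_{\mathcal H_1}$ together with $G_XG_Z^T=0$ reasoning restricted to the support. The weight of $\OR$ of this vector is just $\wgt(\mathbf c)=d_1^{(p)}$ because the other sublattice block is zero.

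The main obstacle — and the reason the theorem is phrased with the subset-distances $d_i^{(p)}$ rather than $d_i^{(p)}$ alone — is step (ii): showing the candidate is a genuine logical operator, i.e.\ \emph{not} in the row span of $G_X$. This is exactly where the condition $\mathbf a\in\mathcal C_{\mathcal H_1}^{(p)}$ nonzero and $\widetilde k_2^{(p)}>0$ enters. I would argue by the symmetry classification of Lemma~\ref{th:lemma-0} and the rank count of Lemma~\ref{th:lemma-1}: if the candidate were a combination of rows of $G_X$, then projecting onto the $p(x)$ symmetry sector would force a relation that, via $k_1^{(p)}k_2^{(p)}/k_0^{(p)}$-type counting, contradicts either $\widetilde k_2^{(p)}>0$ or the nontriviality of the component $\mathbf a$ in $\mathcal C_{\mathcal H_1}^{(p)}$ (the dual pairing between $\mathcal C_{\mathcal H_1}^{(p)}$ and the $\widetilde k_2^{(p)}$ "missing rows" sector is nondegenerate). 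Equivalently, one exhibits a $Z$-type stabilizer-orthogonal vector (built from the $\widetilde k_2^{(p)}>0$ codewords of $\widetilde{\mathcal H}_2$, which exist precisely because $\widetilde k_2^{(p)}>0$) whose symplectic product with the candidate is nonzero, certifying it is logical. The second statement, $D\le\min(d_2^{(p)},\widetilde d_1^{(p)})$ when $k_2^{(p)}>0,\widetilde k_1^{(p)}>0$, is obtained by the same argument with the roles of the two sublattices and of $\mathcal H_i\leftrightarrow\widetilde{\mathcal H}_i$ interchanged, using that $G_X$ and $G_Z$ play symmetric roles in~(\ref{eq:Hyperbicycle}). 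I expect the bookkeeping of which standard basis vector to tensor with, and verifying the nondegeneracy of the relevant pairing inside the $p(x)$-sector, to be the only delicate points; everything else is a restriction-to-support argument already used in the proof of Theorem~\ref{th:css-lower-bound-generic}.
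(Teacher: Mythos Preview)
Your overall strategy matches the paper's: build a candidate logical operator of the form $\mathbf{u}=(\mathbf{e}\otimes\mathbf{c},0)$ with $\mathbf{c}$ a minimum-weight vector realizing $d_1^{(p)}$ and $\mathbf{e}$ a standard basis vector, then show it is nontrivial. Two points need fixing, though.

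First, your $X/Z$ bookkeeping is swapped. With $\mathbf{c}\in\mathcal C_{\mathcal H_1}$ and $G_X=(E_b\otimes\mathcal H_1,\;\mathcal H_2\otimes E_a)$, one gets $G_X\mathbf{u}=\mathbf{e}\otimes(\mathcal H_1\mathbf{c})=0$ \emph{automatically}, for \emph{any} $\mathbf{e}$; there is nothing to arrange for step~(i). The candidate is therefore a $Z$-type logical (in $\ker G_X$), and what you must show is that it is \emph{not} a linear combination of the rows of $G_Z$ --- not of $G_X$. Your sentence ``$\boldsymbol\beta$ chosen so that the $\mathcal H_2$-part of $G_Z$ annihilates it'' has the roles reversed: the freedom in the weight-one vector $\mathbf{e}$ is spent on step~(ii), not step~(i).

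Second, your step~(ii) is where the proof actually lives, and the sketch (``projecting onto the $p(x)$ sector would force a relation'' / ``nondegenerate pairing'') is not yet an argument. The paper makes it concrete as follows. Write $\mathbf{c}\in\mathcal C_{\mathcal H_1}^{(p)}$ in the form~(\ref{eq:h1-expansion}), fix a column position $s$ with some $\alpha_{is}\neq 0$, and look only at the $s$-component vectors $\mathbf{u}^{(p),s}=(\mathbf{e}\otimes\sum_i\alpha_{is}I_i\mathbf{g}\otimes\mathbf{e}_s^1,\,0)$. Letting $\mathbf{e}$ range over all $r_2$ basis vectors and using the $k_0^{(p)}$ independent translations (Lemma~\ref{th:lemma-H1}), the span of these has dimension $r_2 k_0^{(p)}$ and is isomorphic to the space on which $\mathcal C_{\widetilde{\mathcal H}_2}^{(p)}$ operates. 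The rows of $G_Z$ restricted to this space are exactly the rows of $\widetilde{\mathcal H}_2$ restricted to the $p$-sector, of rank $r_2 k_0^{(p)}-\widetilde k_2^{(p)}$. Since $\widetilde k_2^{(p)}>0$, some choice of $\mathbf{e}$ makes $\mathbf{u}^{(p),s}$ (and hence the full $\mathbf{u}$) linearly independent from the rows of $G_Z$. This is the dimension count you allude to, but the restriction to a single column $s$ --- which isolates a copy of the $\widetilde{\mathcal H}_2$-space inside the big code --- is the step your proposal is missing. Once this is in place, you can indeed repeat it for every $\mathbf{c}=\mathbf{a}+\mathbf{b}$ in the set defining $d_1^{(p)}$, and the remaining three bounds follow by the symmetry interchange you describe.
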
 \begin{proof} Given $k_{1}^{(p)}>0$, consider vector
$\mathbf{u}\equiv(\mathbf{e}\otimes\mathbf{c},0)$, where $\mathbf{c}\in\mathcal{C}_{\mat{H}_{1}}^{(p)}$
and $\wgt(\mathbf{e})=1$. As long as $\widetilde{k}_{2}^{(p)}>0$,
we can always select such $\mathbf{e}$ that $\mathbf{u}$ is not
a linear combination or rows of $G_{Z}$, which would indicate that
$D\le\wgt(\mathbf{c})$.

Indeed, by construction, vector $\mathbf{c}$ can be written in the
form~(\ref{eq:h1-expansion}); let us pick a bit $s$ which is not
identically zero in all $\alpha_{i}$ and construct a vector {[}cf.\ Eq.\
 (\ref{eq:h1h2-expansion}){]} 
\begin{equation}
\mathbf{u}^{(p),s}=(\underbrace{\mathbf{e}\otimes\sum_{i}\alpha_{is}I_{i}\mathbf{g}}\otimes\mathbf{e}_{s}^{1},0).\label{eq:ups}
\end{equation}
Taking all $r_{2}$ different vectors $\mathbf{e}$ and all
$k_{0}^{(p)}$ linearly-independent translations {[}Lemma
  \ref{th:lemma-H1}{]}, we obtain the vector space [as indicated in
  Eq.~(\ref{eq:ups}) with a brace] isomorphic to that on which the
subcode $\mathcal{C}_{\widetilde{\mathcal{H}}_{2}}^{(p)}$ operates. On
the other hand, there are only
$r_{2}k_{0}^{(p)}-\widetilde{k}_{2}^{(p)}$ linearly-independent
combinations of rows of the matrix $\widetilde{\mathcal{H}}_{2}$
restricted to the subspace $\mathcal{C}^{(p)}$. Since
$\widetilde{k}_{2}^{(p)}>0$, at least one of vectors
$\mathbf{u}^{(p),s}$ is linearly independent of the rows of the matrix
$\widetilde{\mathcal{H}}_{2}$ restricted to the subspace
$\mathcal{C}^{(p)}$.

Now, we can construct such a vector $\mathbf{u}$ for every $\mathbf{c}$
from the set in Eq.~(\ref{eq:subset-distance-limits}), which proves
$D\leq d_{1}^{(p)}$. The other bounds in the Theorem can be obtained
from this one by considering isomorphic codes {[}e.g., interchanging
$\widetilde{\mathcal{H}}_{2}$ and $\mathcal{H}_{1}$, and also $\mathcal{H}_{1}$
and $\mathcal{H}_{2}${]}. \end{proof}

The meaning of the condition on $p(x)$ in Theorem~\ref{th:upper-bound-partial}
can be elucidated if we rewrite the number of encoded qubits (\ref{eq:parameters})
with the help of identity (\ref{eq:subcode-transposed-dim}), 
\begin{equation}
K=\sum_{l}k_{1}^{(p_{l})}\widetilde{k}_{2}^{(p_{l})}/k_{0}^{(p_{l})}+\sum_{l}k_{2}^{(p_{l})}\widetilde{k}_{1}^{(p_{l})}/k_{0}^{(p_{l})}.\label{eq:parameters-symmetric}
\end{equation}
Obviously, every term in Eq.~(\ref{eq:parameters-symmetric}) giving
a non-zero contribution to $K$, also gives an upper bound on the
minimum distance of the quantum code.

\subsection{Codes with finite rate and distance scaling as square root
  of block length}

Here we show that the family of hyperbicycle codes contain $(v,h+v)$-limited
LDPC codes with the distance $D\propto\sqrt{N}$ that are distinct
from the hypergraph product codes. Let us start with a \emph{random
}$(h,v)$-regular parity check matrix of a classical LDPC code, where
$h<v$. By removing linearly dependent rows, we can form full-rank
$(h,v)$-limited parity check matrix $a_{1}$ that, along with $b_{j}=a_{1}^{T}$
($j\neq1$ and $\chi$ are arbitrary, in case when $j=1$ and $\chi=1$
we recover the hypergraph-product codes), we use in Eq.~(\ref{eq:Hyperbicycle})
in order to construct the hyperbicycle code where only one term in
each summation in Eq.~(\ref{eq:Hyperbicycle}) is taken. The rate
of the classical code defined by the parity check matrix $a_{1}$
is bounded from below, i.e. $R_{\mathrm{c}}\equiv k_{\mathrm{c}}/n_{\mathrm{c}}\geq1-h/v$.
With high probability at large $n_{\mathrm{c}}$, the classical code
will also have the relative distance in excess of some finite number
$\delta_{\mathrm{c}}$\cite{Litsyn:IEEE2002}. If the classical LDPC
code defined by $a_{1}$ has parameters $[n_{1},k_{1},d_{1}]$ then,
according to Theorem \ref{th:tiled-size}, \ref{th:upper-bound-partial}
and \ref{th:css-lower-bound-generic}, the quantum code will have
parameters $[[c(n_{1}-k_{1})^{2}+cn_{1}^{2},ck_{1}^{2},\geq d_{1}/c]]$.
It follows that a finite rate $(h,v)$-limited classical LDPC code
(defined by the parity check matrix $a_{1}$) with finite relative
distance (we expect the subset relative distance in
~(\ref{eq:subset-distance-limits}) 
to be finite as well) will correspond to a finite rate $(v,h+v)$-limited
quantum LDPC code with the distance $D\propto\sqrt{N}$.

\subsection{Codes with repeated codewords}

In some cases the distance of the hyperbicycle codes is larger than
the lower bound in Theorem \ref{th:css-lower-bound-generic}. In this
section we consider the special case of square matrices $a_{i}$,
$b_{i}$ ($r_{i}=n_{i}$), with the additional restriction that the
codes ${\cal C}_{{\cal H}_{i}}$, ${\cal C}_{\widetilde{\mathcal{H}}_{i}}$
are non-empty ($k_{i}=\widetilde{k}_{i}>0$) and contain only fully-symmetric
vectors in the symmetry class of $p(x)=1+x$. The results we proved
so far give the parameters of such codes summarized by (see also Theorem
3 in Ref.~\cite{Kovalev:ISIT2012}) \begin{consequence} \label{th:square-tile-symmetric-params}
Suppose $a_{i}$ and $b_{i}$ in Eq.~(\ref{eq:check-tiled}) are
such that $k_{i}^{(1+x)}=k_{i}>0$ and $r_{i}=n_{i}$. Then the CSS
code with generators~(\ref{eq:Hyperbicycle}) has the block length
$N=2cn_{1}n_{2}$, encodes $K=2k_{1}k_{2}$ qubits, and has the minimum
distance $D$ limited by $\lfloor d/c\rfloor\le D\le d$,
$d\equiv\min(d_{1},d_{2},\widetilde{d}_{1},\widetilde{d}_{2})$. 
\end{consequence} \begin{proof} By assumption, all vectors in the
codes ${\cal C}_{{\cal H}_{i}}$, $i=1,2$, are in the symmetry class
of $p(x)=1+x$, which corresponds to $k_{0}^{(p)}=1$ and block-symmetric
vectors in the form 
\begin{equation}
\mathbf{w}_{1}=\mathbf{g}\otimes\boldsymbol{\alpha},\quad\mathbf{w}_{2}=\boldsymbol{\beta}\otimes\mathbf{g},\label{eq:symmetric-vectors}
\end{equation}
respectively, with $\mathbf{g}=(1,\ldots,1)$ {[}see Eq.~(\ref{eq:h1-expansion}){]}.
The number of encoded qubits $K$ immediately follows from Theorem~\ref{th:tiled-size},
the block length $N$ from Eq.~(\ref{eq:block-size}), and the lower
bound on the distance from Theorem \ref{th:css-lower-bound-generic}.
Furthermore, with all vectors in the binary codes having the same
symmetry, the upper bound in Theorem~\ref{th:upper-bound-partial}
is just $D\le d$. \end{proof}

At this point we notice that the proof of the lower bound $\lfloor
d/c\rfloor$ on the distance in Theorem
\ref{th:css-lower-bound-generic} implies that there may be uncorrectable
errors of the form $\sum_{s}(\boldsymbol{\beta}_{s}\otimes{\bf
  g}_{s}\otimes\boldsymbol{\beta}_{s}',\boldsymbol{\alpha}_{s}'\otimes{\bf
  g}_{s}\otimes\boldsymbol{\alpha}_{s})$, where all $\mathbf{g}_{s}$
have $\wgt(\mathbf{g}_{s})=1$. On the other hand, if we were to
consider only fully-symmetric vectors, with
$\mathbf{g}_{s}=\mathbf{g}=(1,\ldots,1)$, the factor of $1/c$ would be
unnecessary. We formulate this result as \begin{statement} A symmetric
  vector $\mathbf{u}=(\mathbf{w}_{1},\mathbf{w}_{2})$,
  $\mathbf{w}_{i}=\sum_{s}\boldsymbol{\beta}_{s}^{i}\otimes\mathbf{g}\otimes\boldsymbol{\alpha}_{s}^{i}$
  with $\mathbf{g}=(1,\ldots,1)$, $i=1,2$, that satisfies
  $G_{X}\mathbf{u}=0$ and is linearly independent from the rows of
  $G_{Z}$, has sublattice weights $\wgt(\mathbf{w}_{i})$ either zero
  or $\ge d$. \label{th:symmetric-weight-lower-bound}
\end{statement}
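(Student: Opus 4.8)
The plan is to leverage the reduced-code argument from the proof of Theorem~\ref{th:css-lower-bound-generic}, but applied to the \emph{fully-symmetric} setting where the factor of $1/c$ disappears. Concretely, suppose $\mathbf{u}=(\mathbf{w}_1,\mathbf{w}_2)$ is a symmetric vector as in the statement with $G_X\mathbf{u}=0$ and $\mathbf{u}$ not a linear combination of rows of $G_Z$, and suppose (say) $0<\wgt(\mathbf{w}_1)<d$. I would first restrict attention to the columns of the matrices $a_i$, $b_i$ that are ``touched'' by $\mathbf{u}$: because $\mathbf{w}_1=\sum_s\boldsymbol{\beta}^1_s\otimes\mathbf{g}\otimes\boldsymbol{\alpha}^1_s$ with $\mathbf{g}=(1,\dots,1)$ being the \emph{full} all-ones vector on $c$ positions, each non-zero bit of $\mathbf{w}_1$ already carries along all $c$ of its cyclic translates within the relevant block. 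Hence, unlike the generic case, selecting the relevant columns of $a_1$ (resp.\ $b_1$) gives a reduced matrix $\mathcal{H}_1'$ (resp.\ $\mathcal{H}_2'$) whose effective column count in the $n_1$ (resp.\ $n_2$) ``fiber'' direction is $\wgt(\mathbf{w}_1)/c$ rather than $\wgt(\mathbf{w}_1)$ — the repeated structure collapses the $c$-fold redundancy.

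The key steps, in order, are: (i) write $\mathbf{u}$ in the fully-symmetric form and observe that $\mathbf{w}_2$ must be similarly symmetric so that restricting to the support of $\mathbf{u}$ is consistent with the block structure; (ii) form the reduced hyperbicycle code in the form~(\ref{eq:Hyperbicycle}) with the same $c$ and $\chi$ by keeping only those columns of $a_i$, $b_i$ that appear in the product $G_X\mathbf{u}$, noting that since $\mathbf{u}$ is fully symmetric these surviving columns also come in complete symmetry-respecting families, so the reduced code is still of hyperbicycle type with binary matrices $a_i'$, $b_i'$; (iii) bound the parameters of the reduced classical codes $\mathcal{C}_{\mathcal{H}_1'}$, $\mathcal{C}_{\mathcal{H}_2'}$: the number of columns of $\mathcal{H}_i'$ is at most $c\cdot(\wgt(\mathbf{w}_1)/c)=\wgt(\mathbf{w}_1)<d$, so in particular the reduced classical codes have length strictly below their distances $d_i,\widetilde d_i$ and are therefore empty ($k_i'=\widetilde k_i'=0$); (iv) invoke Consequence~\ref{th:zero-code} to conclude the reduced quantum code encodes no qubits, so the reduced $\mathbf{u}'$ (with $G_X'\mathbf{u}'=0$) is a linear combination of rows of $G_Z'$; (v) lift this back: the rows of $G_Z'$ are a subset of rows of $G_Z$ with only all-zero columns removed, so $\mathbf{u}$ itself is a linear combination of rows of $G_Z$, contradicting the hypothesis. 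The same argument with $\mathbf{w}_1\leftrightarrow\mathbf{w}_2$ and $G_X\leftrightarrow G_Z$ handles the other sublattice, and symmetry of the roles of $a_i$ and $b_i$ handles the remaining cases.

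The main obstacle I anticipate is step~(iii): carefully justifying that the all-ones structure of $\mathbf{g}$ genuinely lets the reduced check matrix have at most $\wgt(\mathbf{w}_i)$ columns (not $c\,\wgt(\mathbf{w}_i)$) — i.e.\ that ``restricting to the support of a fully-symmetric vector'' yields a reduced code whose relevant classical subcode still has block length controlled by $\wgt(\mathbf{w}_i)$ rather than its $c$-fold inflation. This amounts to checking that the columns selected by the non-zero bits of $\mathbf{w}_1=\sum_s\boldsymbol{\beta}^1_s\otimes\mathbf{g}\otimes\boldsymbol{\alpha}^1_s$ partition into blocks of size $c$, each block being a full $\mathbf{g}$-orbit, so that after reduction the length in the fiber direction is $\wgt(\mathbf{w}_1)/c$ and the total column count of the reduced $\mathcal{H}_1'$ is $c\cdot(\wgt(\mathbf{w}_1)/c)=\wgt(\mathbf{w}_1)$. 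Once this is in place, the emptiness of the reduced classical codes for $\wgt(\mathbf{w}_1)<d$ is immediate from the distance definitions, and the rest follows exactly the template of Theorem~\ref{th:css-lower-bound-generic}.
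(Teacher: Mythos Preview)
Your step~(iii) is where the argument breaks. When you reduce by keeping only those columns of $a_i$ and $b_i$ that meet the support of $\mathbf{u}$, the column count of $\mathcal{H}_1'$ is controlled by the $n_1$-support of $\mathbf{w}_1$, while that of $\mathcal{H}_2'$ is controlled by the $n_2$-support of $\mathbf{w}_2$---not both by $\wgt(\mathbf{w}_1)$. Your hypothesis $\wgt(\mathbf{w}_1)<d$ (plus the $\mathbf{g}$-symmetry) indeed forces the reduced $\mathcal{H}_1'$ to have fewer than $d$ columns, so $k_1'=0$; but it says nothing about $\mathcal{H}_2'$, so $k_2'$ is unconstrained. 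Your claim $\widetilde{k}_i'=0$ is also wrong: deleting columns of $a_i$ deletes \emph{rows} of $\widetilde{\mathcal{H}}_1'$ while leaving its block length $cr_1$ intact, so $\widetilde{k}_1'\ge\widetilde{k}_1>0$. In Eq.~(\ref{eq:parameters-symmetric}) the term $k_2^{(p)\prime}\widetilde{k}_1^{(p)\prime}$ can therefore be positive, $K'>0$, and Consequence~\ref{th:zero-code} gives no contradiction. Concretely, already for $c=1$ (say the $3\times3$ toric code) one can add a single plaquette row of $G_Z$ to a straight non-contractible loop and obtain a symmetric non-trivial $\mathbf{u}$ with sublattice weights $(3,2)$; the second weight lies strictly between $0$ and $d=3$, so the literal per-sublattice claim cannot be reached by a reduction that only bounds one sublattice.

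What makes the statement usable in the paper is that the symmetric representative $\mathbf{u}^{(1+x)}$ invoked in Theorem~\ref{th:tiled2-1} is not an arbitrary symmetric element of $\ker G_X$: it is built from the single-sublattice solutions of the form~(\ref{eq:ups}) (and their second-sublattice counterparts), so that $(\mathbf{w}_1,0)$ and $(0,\mathbf{w}_2)$ lie \emph{separately} in $\ker G_X$ and are each either zero or linearly independent from the rows of $G_Z$. Applying the reduction argument of Theorem~\ref{th:css-lower-bound-generic} to $(\mathbf{w}_1,0)$ alone then makes $\mathbf{w}_2=0$, hence $n_2'=0$ and $k_2'=0$ automatically, and the factor $1/c$ disappears exactly as you and the paper anticipate. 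Your plan becomes correct once you first pass to this single-sublattice decomposition; as written, step~(iii) conflates the two sublattices and the conclusion does not follow.
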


Let us first consider the case $c=2$ (then $\chi$ must be equal to
$1$); we previously formulated the sufficient conditions to increased
lower distance bound as Theorem 3 in Ref.~\cite{Kovalev:ISIT2012}
which was given without a proof.

\begin{theorem} \label{th:tiled2-1} Suppose $c=2$, $a_{i}$ and $b_{i}$
  in Eq.~(\ref{eq:check-tiled}) are such that $k_{i}^{(1+x)}=k_{i}>0$,
  $r_{i}=n_{i}$ and binary codes with generator matrices $\sum a_{i}$,
  $\sum a_{i}^{T}$, $\sum b_{i}$ and $\sum b_{i}^{T}$ have distances
  at least $2$. Then the CSS quantum code with generators
  Eq.~(\ref{eq:Hyperbicycle}) has parameters
  $[[4n_{1}n_{2},2k_{1}k_{2},d]]$, where
  $d=\min(d_{1},d_{2},\widetilde{d}_{1},\widetilde{d}_{2})$.
\end{theorem}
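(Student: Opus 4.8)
The plan is to reduce Theorem~\ref{th:tiled2-1} to a combination of Consequence~\ref{th:square-tile-symmetric-params} and Lemma~\ref{th:symmetric-weight-lower-bound}. The parameters $N=4n_1n_2$ and $K=2k_1k_2$, together with the upper bound $D\le d$, are immediate from Consequence~\ref{th:square-tile-symmetric-params} (specialized to $c=2$), so the entire content of the theorem is the \emph{lower} bound $D\ge d$, which must improve on the generic bound $\lfloor d/2\rfloor$ of Theorem~\ref{th:css-lower-bound-generic}. By the CSS structure and the symmetry of the construction under interchange of sublattices, it suffices to show that any $\mathbf{u}=(\mathbf{w}_1,\mathbf{w}_2)$ with $G_X\mathbf{u}=0$ that is not a linear combination of rows of $G_Z$ has $\wgt\OR(\mathbf{w}_1,\mathbf{w}_2)\ge d$; the argument for $G_Z$ is identical after interchanging $\mathcal{H}_i\leftrightarrow\widetilde{\mathcal{H}}_i$.

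First I would set up the $c=2$ block decomposition: write $\mathbf{w}_1$ and $\mathbf{w}_2$ in the form dictated by Eq.~(\ref{eq:check-tiled}), so that each has two ``$c$-blocks'', and decompose $\mathbf{w}_i$ into its fully-symmetric part (both $c$-blocks equal, of the form $\boldsymbol{\beta}\otimes\mathbf{g}\otimes\boldsymbol{\alpha}$ with $\mathbf{g}=(1,1)$) and an ``antisymmetric'' remainder (the two $c$-blocks differ). The key structural observation is that the condition $G_X\mathbf{u}=0$ couples the two sublattices, and the hypothesis that $\sum a_i$, $\sum a_i^T$, $\sum b_i$, $\sum b_i^T$ all have distance $\ge 2$ is exactly what forces the ``collapsed'' ($\mathbf{g}$-contracted) equations to have no weight-one solutions. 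Concretely, contracting the two $c$-blocks of the defining equation $G_X\mathbf{u}=0$ (i.e.\ summing over the $c$-index) produces a system whose solutions are governed by $\sum a_i$ and $\sum b_i$; the distance-$\ge2$ hypothesis then rules out the ``dangerous'' low-weight solutions in which the support of $\mathbf{w}_1$ or $\mathbf{w}_2$ lives on a single weight-one $\mathbf{g}_s$-column as in the error $\sum_s(\boldsymbol{\beta}_s\otimes\mathbf{g}_s\otimes\boldsymbol{\beta}_s',\,\boldsymbol{\alpha}_s'\otimes\mathbf{g}_s\otimes\boldsymbol{\alpha}_s)$ that obstructs the bound in Theorem~\ref{th:css-lower-bound-generic}.

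The core step, then, is a case split on $\mathbf{u}$. If $\mathbf{u}$ is itself fully symmetric, Lemma~\ref{th:symmetric-weight-lower-bound} applies directly: each sublattice weight is $0$ or $\ge d$, and since $\mathbf{u}$ is nontrivial modulo $G_Z$ at least one sublattice is nonzero, giving $\wgt\OR(\mathbf{w}_1,\mathbf{w}_2)\ge d$. If $\mathbf{u}$ has a nonzero antisymmetric part, I would argue that one can modify $\mathbf{u}$ by adding rows of $G_Z$ (which does not change its status as a logical operator, nor increase the relevant $\OR$-weight below $d$) to either symmetrize it — reducing to the previous case — or else expose a codeword of one of the four classical codes $\mathcal{C}_{\mathcal{H}_i}$, $\mathcal{C}_{\widetilde{\mathcal{H}}_i}$ of weight $\ge d_i\ge d$, using the distance-$\ge2$ hypotheses to preclude the antisymmetric part from being supported on a single $c$-block column. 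This is where the bookkeeping is heaviest and where I expect the main obstacle: one must track how adding a row of $G_Z$ redistributes weight between the symmetric and antisymmetric sectors of \emph{both} sublattices simultaneously, and verify that the minimum over all these redistributions is still $\ge d$ rather than merely $\ge\lfloor d/2\rfloor$. The distance-$\ge2$ conditions are precisely the hypotheses that close this gap, so the proof hinges on making that implication explicit and checking it is tight.

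Finally I would note that the remaining three bounds in $d=\min(d_1,d_2,\widetilde d_1,\widetilde d_2)$ follow by the same symmetry considerations invoked at the end of the proof of Theorem~\ref{th:upper-bound-partial} (interchanging $\mathcal{H}_1\leftrightarrow\mathcal{H}_2$ and $\mathcal{H}_i\leftrightarrow\widetilde{\mathcal{H}}_i$), so only one representative case needs to be written out in full.
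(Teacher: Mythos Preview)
Your overall plan---reduce to Consequence~\ref{th:square-tile-symmetric-params} for $N$, $K$, and the upper bound, then invoke Lemma~\ref{th:symmetric-weight-lower-bound} and the distance-$\ge2$ hypotheses for the lower bound---matches the paper's. But the specific mechanism you leave as ``heaviest bookkeeping'' is the actual content of the proof, and your framing of it is backwards in a way that would not close.

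You propose a case split: if $\mathbf{u}$ is symmetric, apply Lemma~\ref{th:symmetric-weight-lower-bound}; if not, add rows of $G_Z$ to symmetrize and then reduce to the first case. The problem is that symmetrizing may \emph{increase} the weight, so knowing the symmetric representative has weight $\ge d$ tells you nothing about the original $\mathbf{u}$. What is needed is the opposite direction: show that \emph{every} coset representative has weight $\ge d$. The paper does this by first writing an arbitrary $\mathbf{u}$ as $\mathbf{u}^{(1+x)}+\boldsymbol{\gamma}^T G_Z$ with $\mathbf{u}^{(1+x)}=(\mathbf{w}_1,\mathbf{w}_2)$ the symmetric logical, and then bounding $\wgt(\mathbf{u})$ directly via a \emph{column-by-column} argument you do not have. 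Namely, expand the first-sublattice part of $\mathbf{u}$ as $\sum_s \mathbf{w}'_{1s}\otimes\mathbf{e}_s^1$. Lemma~\ref{th:symmetric-weight-lower-bound} gives $\wgt(\mathbf{w}_1)\ge d$, hence at least $d/2$ indices $s$ with $\boldsymbol{\beta}_s\neq\mathbf{0}$ (each contributing weight $2$ via $\mathbf{g}=(1,1)$). For each such $s$ one has $\mathbf{w}'_{1s}=\boldsymbol{\beta}_s\otimes(1,1)+\boldsymbol{\gamma}'_s\otimes(1,0)+\boldsymbol{\gamma}''_s\otimes(0,1)$, and the crucial observation is that $\boldsymbol{\gamma}'_s+\boldsymbol{\gamma}''_s$ lies in the row space of $a_0^T+a_1^T=\sum_i a_i^T$, hence in the code generated by $\sum_i a_i^T$, which by hypothesis has distance $\ge2$. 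This forces $\wgt(\mathbf{w}'_{1s})\ge2$ for every such $s$, so the first-sublattice weight of $\mathbf{u}$ is at least $2\cdot(d/2)=d$. Your ``contract the two $c$-blocks of $G_X\mathbf{u}=0$'' idea is adjacent but not the mechanism: the distance-$\ge2$ hypothesis enters through the \emph{degeneracy} columns $\boldsymbol{\gamma}'_s+\boldsymbol{\gamma}''_s$, not through a collapsed version of $\mathbf{u}$ itself.

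A minor point: the relevant weight here is $\wgt(\mathbf{u})=\wgt(\mathbf{w}_1)+\wgt(\mathbf{w}_2)$, not $\wgt\OR(\mathbf{w}_1,\mathbf{w}_2)$; the two sublattices are disjoint blocks of qubits, and $\mathbf{u}$ is a pure $Z$-type (or $X$-type) error in a CSS code.
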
 \begin{proof} In addition to what is stated in
  Consequence~\ref{th:square-tile-symmetric-params}, we only need to
  prove that $d$ is also the lower bound on the distance.  To this
  end, notice that any vector $\mathbf{u}$ such that
  $G_{X}\mathbf{u}=0$ can be decomposed as the sum of an ``actual''
  solution plus degeneracy,
  $\mathbf{u}^{(1+x)}+\boldsymbol{\gamma}^{T}G_{Z}$, where
  $\mathbf{u}^{(1+x)}\equiv (\mathbf{w}_1,\mathbf{w}_2)$ is a
  block-symmetric vector satisfying the conditions of
  Lemma~\ref{th:symmetric-weight-lower-bound} and linearly-independent
  from the rows of $G_{Z}$. This decomposition can be verified by
  comparing $K$ with the number of linearly-independent solutions in
  the form~(\ref{eq:ups}), as well as those on the other sublattice.
  First, let us assume $\wgt(\mathbf{w}_{1})>0$ and therefore
  $\wgt(\mathbf{w}_{1})\ge d$.  We can rewrite the corresponding
  decomposition as
  $\mathbf{w}_{1}=\sum_{s}\boldsymbol{\beta}_{s}\otimes\mathbf{g}\otimes\mathbf{e}_{s}^{1}$,
  where $\mathbf{g}\equiv (1,1)$, each $\mathbf{e}_{s}^{1}$ has length
  $n_{1}$ and $\wgt(\mathbf{e}_{s}^{1})=1$, with the non-zero element
  in the position $s$; there must be at least $d/2$ non-zero vectors
  $\boldsymbol{\beta}_{s}$. The full solution including the degeneracy
  can be formally written as
  $\sum_{s}\mathbf{w}_{1s}'\otimes\mathbf{e}_{s}^{1}$, where
\begin{equation}
  \mathbf{w}_{1s}'\equiv
  \boldsymbol{\beta}_{s}\otimes\mathbf{g}+
  \boldsymbol{\gamma}'_{s}\otimes(1,0)+
  \boldsymbol{\gamma}''_{s}\otimes(1,0),\label{eq:vector-total-decomp-c2}
\end{equation}
where the sum of the last two vectors is a linear combination of rows
of $\widetilde{\mathcal{H}}_{2}$. The key to the proof is the
observation that $\boldsymbol{\gamma}'_{s}+\boldsymbol{\gamma}''_{s}$
is a linear combination of rows of $a_{0}^{T}+a_{1}^{T}$, and
therefore is in the binary code generated by $\sum_{s}a_{s}$; by
condition the corresponding weight is either zero or $\ge2$. Without
limiting generality we can drop the case
$\boldsymbol{\gamma}'_{s}=\boldsymbol{\gamma}''_{s}\neq\mathbf{0}$
which corresponds to a symmetric vector and can be included as a part
of $\mathbf{u}^{(1+x)}$. We are left with the trivial
$\boldsymbol{\gamma}'_{s}=\boldsymbol{\gamma}''_{s}=0$, in which case
$\mathbf{w}_{1s}'=\boldsymbol{\beta}_{s}\otimes(1,1)$ remains
unchanged; otherwise
$\boldsymbol{\gamma}'_{s}\neq\boldsymbol{\gamma}''_{s}$, in which case
the weight of the modified $\mathbf{w}_{1s}'$ can be lower bounded by
that of the sum of the components corresponding to $(1,0)$ and
$(0,1)$,
\begin{equation}
  \wgt(\mathbf{w}_{1s}')\ge\wgt(\boldsymbol{\gamma}_{s}'
  +\boldsymbol{\gamma}_{s}'')\ge2;\label{eq:weight-lower-bound-c2} 
\end{equation}
with at least $d/2$ such terms the total weight is $d$ or greater.
The same arguments can be repeated in the case
$\wgt(\mathbf{w}_{2})\neq0$, as well as for the space orthogonal to
$G_{X}$. Overall, this proves the lower bound $D\ge d$; combined with
the upper bound we get $D=d$.
\end{proof}

\begin{theorem} \label{th:tiled3} Suppose $c$ is even, $a_{i}$ and
  $b_{i}$ in Eq.~(\ref{eq:check-tiled}) are such that
  $k_{i}^{(1+x)}=k_{i}$, $r_{i}=n_{i}$ and binary codes with generator
  matrices $\sum a_{i}$, $\sum a_{i}^{T}$, $\sum b_{i}$ and $\sum
  b_{i}^{T}$ have distance at least $2$. Then the quantum code in
  Eq.~(\ref{eq:Hyperbicycle}) has parameters
  $[[2n_{1}n_{2}c,2k_{1}k_{2},D]]$, where $(2/c)d\le D\le d$ and
  $d\equiv\min(d_{1},d_{2},\widetilde{d}_{1},\widetilde{d}_{2})$.
\end{theorem}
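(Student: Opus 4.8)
The plan is to mirror the strategy of Theorem~\ref{th:tiled2-1}, keeping careful track of how the factor $c$ (now an arbitrary even integer rather than $2$) propagates. The block length $N=2cn_1n_2$, the dimension $K=2k_1k_2$, and the upper bound $D\le d$ are immediate from Consequence~\ref{th:square-tile-symmetric-params}, since the hypothesis $k_i^{(1+x)}=k_i$ with $r_i=n_i$ is exactly its hypothesis; likewise the generic lower bound gives $D\ge\lfloor d/c\rfloor$. So the whole content is to sharpen $\lfloor d/c\rfloor$ to $(2/c)d$, i.e.\ to show that any $\mathbf u$ with $G_X\mathbf u=0$ that is linearly independent from the rows of $G_Z$ has weight at least $(2/c)d$ (and symmetrically for the space orthogonal to $G_Z$).

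First I would invoke the same decomposition as in Theorem~\ref{th:tiled2-1}: write $\mathbf u=\mathbf u^{(1+x)}+\boldsymbol\gamma^TG_Z$, where $\mathbf u^{(1+x)}=(\mathbf w_1,\mathbf w_2)$ is block-symmetric, satisfies $G_X\mathbf u^{(1+x)}=0$, is linearly independent from the rows of $G_Z$, and hence by Lemma~\ref{th:symmetric-weight-lower-bound} has each sublattice weight either $0$ or $\ge d$; the validity of this decomposition follows again by counting linearly independent solutions of the form~(\ref{eq:ups}) against $K$. Assume $\wgt(\mathbf w_1)>0$, so $\wgt(\mathbf w_1)\ge d$. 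Now $\mathbf w_1=\sum_s\boldsymbol\beta_s\otimes\mathbf g\otimes\mathbf e_s^1$ with $\mathbf g=(1,\dots,1)$ of length $c$ and each $\mathbf e_s^1$ a weight-one vector of length $n_1$ in position $s$; since $\wgt(\mathbf g)=c$, there must be at least $d/c$ distinct positions $s$ with $\boldsymbol\beta_s\neq\mathbf 0$. For each such $s$, the full (degeneracy-corrected) component is $\mathbf w_{1s}'=\boldsymbol\beta_s\otimes\mathbf g+\sum_{j}\boldsymbol\gamma^{(j)}_s\otimes\mathbf e_j^{(c)}$, where the $\mathbf e_j^{(c)}$ run over weight-one length-$c$ vectors and the combination $\sum_j\boldsymbol\gamma^{(j)}_s\otimes\mathbf e_j^{(c)}$ is a linear combination of rows of $\widetilde{\mathcal H}_2$.

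The key step — and I expect this to be the main obstacle — is the following generalization of the $c=2$ observation: for each fixed $s$, consider the "difference vectors" $\boldsymbol\gamma^{(j)}_s+\boldsymbol\gamma^{(j')}_s$ for distinct $j,j'$, or equivalently the image of $\{\boldsymbol\gamma^{(j)}_s\}_j$ under the map that subtracts the "symmetric part'' (the component along $\mathbf g$). Because the rows of $\widetilde{\mathcal H}_2$ have the tiled/circulant structure~(\ref{eq:check-tiled}) built from the $b_i^T$, each such difference lies in the binary code generated by $\sum_i b_i$ (equivalently $\sum_i b_i^T$), whose distance is $\ge2$ by hypothesis; hence any non-symmetric $\mathbf w_{1s}'$ has weight $\ge2$. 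Here is where the even-$c$ hypothesis enters: after removing the genuinely symmetric part (which is absorbed into $\mathbf u^{(1+x)}$), the remaining length-$c$ profile $(\boldsymbol\gamma^{(0)}_s,\dots,\boldsymbol\gamma^{(c-1)}_s)$ cannot be constant, so it takes at least two distinct values, and one shows that the total weight contributed across the $c$ tiles is at least $2$ — but, unlike the $c=2$ case, one must check that this bound of $2$ per active $s$ is not destroyed by cancellations among the $c$ tiles; the cleanest argument is that the "non-symmetric'' part, viewed as an element of a length-$c$ quasicyclic structure with $k_0^{(1+x)}=1$, must have at least two nonzero tile-entries, and each such entry is a nonzero codeword of a distance-$\ge2$ binary code, giving $\wgt(\mathbf w_{1s}')\ge2$. (If $\mathbf w_{1s}'$ happens to be symmetric it equals $\boldsymbol\beta_s\otimes\mathbf g$ of weight $c\ge2$, which is even better.)

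Finally, summing over the $\ge d/c$ active positions $s$ yields $\wgt(\mathbf w_1)\ge 2\cdot(d/c)=(2/c)d$ after degeneracy correction, hence $\wgt(\mathbf u)\ge(2/c)d$. The identical argument applies when $\wgt(\mathbf w_2)>0$, using instead the codes generated by $\sum_i a_i$, $\sum_i a_i^T$, and — by passing to isomorphic codes, exactly as at the end of the proof of Theorem~\ref{th:upper-bound-partial} — to the space orthogonal to $G_Z$. Combining with the upper bound $D\le d$ gives $(2/c)d\le D\le d$, as claimed. The subtle point to get right in a full write-up is the per-$s$ lower bound of $2$: one must argue that the $c$ tiles cannot conspire to cancel below weight $2$, which is where evenness of $c$ and the distance-$\ge 2$ assumption on the four "collapsed'' binary codes are both used.
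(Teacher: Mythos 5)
Your skeleton coincides with the paper's route: the paper's proof of Theorem~\ref{th:tiled3} is explicitly ``the same as Theorem~\ref{th:tiled2-1}'' --- block length, dimension and the upper bound $D\le d$ from Consequence~\ref{th:square-tile-symmetric-params}, the decomposition $\mathbf{u}=\mathbf{u}^{(1+x)}+\boldsymbol{\gamma}^{T}G_{Z}$ with Lemma~\ref{th:symmetric-weight-lower-bound}, the per-slice vectors $\mathbf{w}_{1s}'$ generalizing Eq.~(\ref{eq:vector-total-decomp-c2}), and the reduction of everything to the single claim that a non-zero $\mathbf{w}_{1s}'$ has $\wgt(\mathbf{w}_{1s}')\ge2$. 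However, that claim is exactly where your proposal has a genuine gap, and the two justifications you sketch for it do not work for $c>2$. First, the pairwise differences $\boldsymbol{\gamma}_{s}^{(j)}+\boldsymbol{\gamma}_{s}^{(j')}$ are \emph{not} in the binary code generated by $\sum_{i}b_{i}^{T}$ (or $\sum_{i}a_{i}^{T}$): a single row of $\widetilde{\mathcal{H}}_{2}$ has tile entries that are rows of the \emph{individual} matrices $b_{i}^{T}$ placed in different blocks, and only the sum over all $c$ tiles collapses to a row of $\sum_{i}b_{i}^{T}$; for $c=2$ the pairwise difference happens to equal that total sum, which is why the $c=2$ observation does not transfer verbatim. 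Second, for the same reason the individual non-zero tile entries of the ``non-symmetric part'' are not codewords of the distance-$\ge2$ collapsed code, so your ``cleanest argument'' (at least two non-zero tiles, each of weight $\ge2$) is unfounded.

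The paper closes this step differently: sum $\mathbf{w}_{1s}'$ over all $c$ cyclic block shifts. Because $c$ is even, the symmetric term $\boldsymbol{\beta}_{s}\otimes\mathbf{g}$ cancels, giving $c\,\wgt(\mathbf{w}_{1s}')\ge c\,\wgt(\boldsymbol{\gamma}_{s}^{(1)}+\cdots+\boldsymbol{\gamma}_{s}^{(c)})$; the total sum \emph{is} a linear combination of rows of the collapsed matrix, hence has weight $0$ or $\ge2$, and only the weight-$0$ case remains. That residual case is handled by a direct combinatorial analysis: either the non-zero $\boldsymbol{\gamma}_{s}^{(j)}$ consist of an even number $m<c$ of identical vectors, so on the remaining $c-m\ge2$ blocks $\mathbf{w}_{1s}'$ restricts to $\boldsymbol{\beta}_{s}\neq\mathbf{0}$ (here evenness of both $m$ and $c$ is used a second time), or there are two pairs of unequal vectors, which again forces at least two non-zero block entries. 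Your proposal invokes evenness of $c$ only through the vague statement that ``the profile cannot be constant,'' which by itself gives no weight bound and does not exclude cancellations among the $c$ tiles. Since the sharpening from $\lfloor d/c\rfloor$ to $(2/c)d$ is the entire content of the theorem beyond Consequence~\ref{th:square-tile-symmetric-params} and Theorem~\ref{th:css-lower-bound-generic}, the proposal as written does not prove the statement; it would be completed precisely by the shift-averaging inequality plus the case analysis above (applied on both sublattices and to the space orthogonal to $G_{Z}$, as you correctly indicate).
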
 \begin{proof}The proof is similar to the proof of
  Theorem \ref{th:tiled2-1}, except that now vectors
  $\mathbf{w}_{1s}'$ are defined by the analog of
  Eq.~(\ref{eq:vector-total-decomp-c2}) which has
  $\mathbf{g}=(1,\ldots,1)$ with $c$ components and more terms with
  $\gamma_s^{(j)}$ in the r.h.s., $j=1,\ldots,c$.  We need to show
  that a non-zero $\mathbf{w}_{1s}'$ has $\wgt(\mathbf{w}_{1s}')\ge
  2$, which ensures that the minimum distance of the code is at least
  $2d/c$.

  With $c>2$ and even, after the summation over all possible shifts of
  the vector $\mathbf{w}_{1s}'$ with respect to the block structure
  the symmetric term disappears, and we obtain the inequality
  $c\wgt(\mathbf{w}_{1s}')\ge c\wgt(\boldsymbol{\gamma}^{1}_{s}
  +\boldsymbol{\gamma}^{2}_{s} +\ldots+\boldsymbol{\gamma}^{c}_{s})$.
  The sum in the r.h.s.\ is a linear combination of rows of $\sum
  a_{i}^{T}$; by assumption, it's weight is either $\ge 2$ or zero.
  The only non-trivial situation corresponds to the latter case with
  some $\boldsymbol{\gamma}_s^{\ell_1}\neq 0$.  For the sum to be
  zero, either there is an even number $m$ of identical vectors
  $\boldsymbol{\gamma}_s^{\ell_1}=\boldsymbol{\gamma}_s^{\ell_2}=\ldots
  =\boldsymbol{\gamma}_s^{\ell_m}$, with $m<c$ and all indices
  different [this situation results in $\wgt(\mathbf{w}_{1s}')\ge
    (c-m)\ge 2$ since both $m$ and $c$ are even and
    $\boldsymbol{\beta}_s\neq\mathbf{0}$], or there are at least two
  pairs of unequal vectors
  $\boldsymbol{\gamma}_s^{\ell_1}\neq\boldsymbol{\gamma}_s^{\ell_2}$
  and
  $\boldsymbol{\gamma}_s^{\ell_3}\neq\boldsymbol{\gamma}_s^{\ell_4}$,
  with $\boldsymbol{\gamma}_s^{\ell_2}\neq
  \boldsymbol{\gamma}_s^{\ell_4}$, which also gives
  $\wgt(\mathbf{w}_{1s}')\ge 2$. \end{proof}

In order to obtain codes with repeated structure (see
Fig.~\ref{fig:MultiCode}), one can start with two cyclic LDPC codes
with block lengths $n_{i}$, $i=1,2$, and the check polynomials
$h_{i}(x)$ that divide $x^{n_{i}}-1$.  The polynomials $h_{i}(x)$ will
also divide $x^{cn_{i}}-1$, thus the corresponding circulant
parity-check matrix $\mathcal{H}_{i}$ of dimensions $cn_{i}\times
cn_{i}$ will lead to a code with repeated structure satisfying Theorem
\ref{th:tiled3} since the corresponding generator polynomial is
$g_{i}(x)=(x^{(c-1)n_{i}}+x^{(c-2)n_{i}}+\ldots+1)\,(x^{n_{i}}+1)/h_{i}(x)$,
$i=1,2$.

\begin{example}Suppose we use the polynomial $h(x)$ corresponding to
  the shortened Reed-Muller cyclic code with parameters
  $[2^{m}-1,m+1,2^{m-1}-1]$ in order to construct circulant matrices
  $\mathcal{H}_{1}=\mathcal{H}_{2}$ of dimensions
  $2(2^{m}-1)\times2(2^{m}-1)$. According to Theorem \ref{th:tiled3},
  a code in Eq.~(\ref{eq:Hyperbicycle}) with $c=2$ and $\chi=1$ will
  have parameters $[[4(2^{m}-1)^{2},2(m+1)^{2},2(2^{m-1}-1)]]$.  This
  family leads to weight limited LDPC codes and up to $m=11$ there is
  always a choice of polynomial $h(x)$ of weight $4$ which leads to
  quantum LDPC code with stabilizer generators of weight
  $8$.\end{example}

\begin{example}
\label{ex:Rate} 
  Given two ``small'' cyclic codes $[n_i,k_i,d_i]$ with check
  polynomials $h_i(x)$, $i=1,2$, we can construct a $c=1$
  hypergraph-product quantum code with the parameters
  $[[2n_1n_2,2k_1k_2,d]]$, $d=\min(d_1,d_2)$, a repeated even-$c$
  code with the parameters   $[[2c n_1n_2,2k_1k_2,D]]$, $2d\le D\le
  cd$, or a hypergraph-product code $[[2c^2n_1n_2, 2k_1k_2,dc]]$ using
  the ``large'' cyclic codes with the same check polynomials and the
  block lengths $cn_i$.
\end{example}
Note that in this Example the code rate goes down compared to the hypergraph-product code
constructed from the ``small'' cyclic codes and goes up compared to the hypergraph-product code constructed from the ``large'' cyclic codes.

\begin{figure}[htbp]
  \includegraphics[width=1\columnwidth]{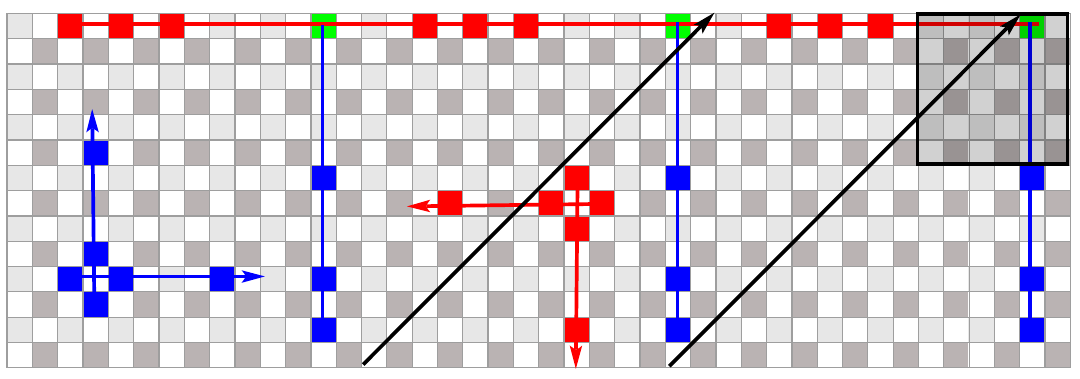} \caption{(Color
    online) Same as Fig.\ \ref{fig:Visualization} for the
    $[[294,18,4\leq D\leq12]]$ code in Eq.~(\ref{eq:Hyperbicycle})
    formed by circulant matrices $\mathcal{H}_{1}=\mathcal{H}_{2}$
    corresponding to coefficients of the polynomial $h(x)=1+x+x^{3}$
    with $c=3$ and $\chi=1$.  Two stabilizer generators are marked by
    red and blue arrows, respectively, and two anticommuting logical
    operators are marked by red and blue lines, respectively.  All
    other stabilizer generators are obtained by shifts over the same
    sublattice with periodicity in the horizontal direction and
    shifted periodicity (shown by arrows) in the vertical
    direction. In the shaded region, each gray square uniquely
    corresponds to a different logical operator, thus $18$ encoded
    logical qubits. One can observe the tripling of the logical
    operators, thus the overlap (green square) is also repeated three
    times.}

\label{fig:MultiCode} 
\end{figure}

\subsection{Planar qubit layout of hyperbicycle codes and encoding}

The stabilizer generators corresponding to Eq.~(\ref{eq:Hyperbicycle})
can be graphically represented on two rectangular regions corresponding
to two sublattices. In case, when matrices $\mathcal{H}_{1}$ and
$\mathcal{H}_{2}$ are square, the rectangular regions of sublattices
have the same dimensions and can be drawn together with parameters
$c$ and $\chi$ corresponding to the number of square blocks and
boundary shift, respectively, see, e.g., Fig. \ref{fig:Transformation}.
Furthermore, in some cases, we can represent logical operators by
line-like operators with a possibility of using this layout for encoding.
\begin{figure}[htbp]
\centering 
\includegraphics[width=1\columnwidth]{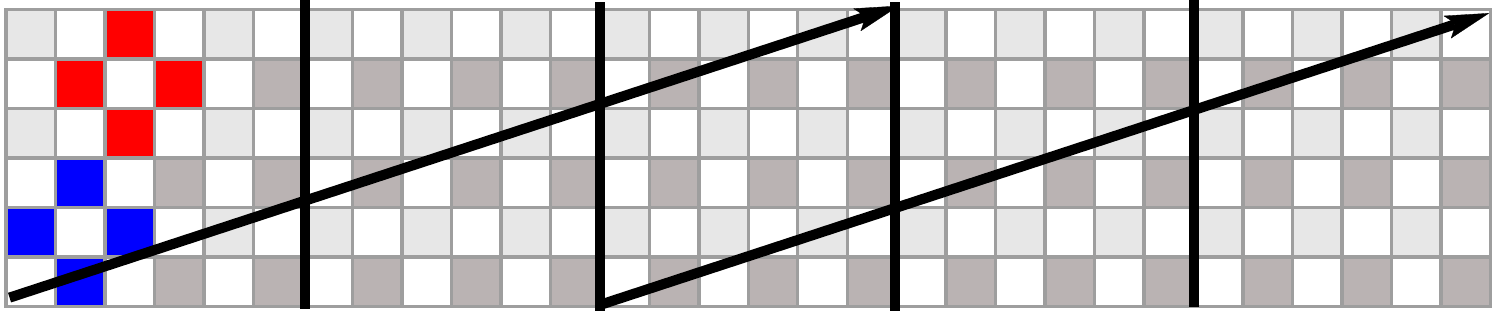}\vskip0.1in
\includegraphics[width=1\columnwidth]{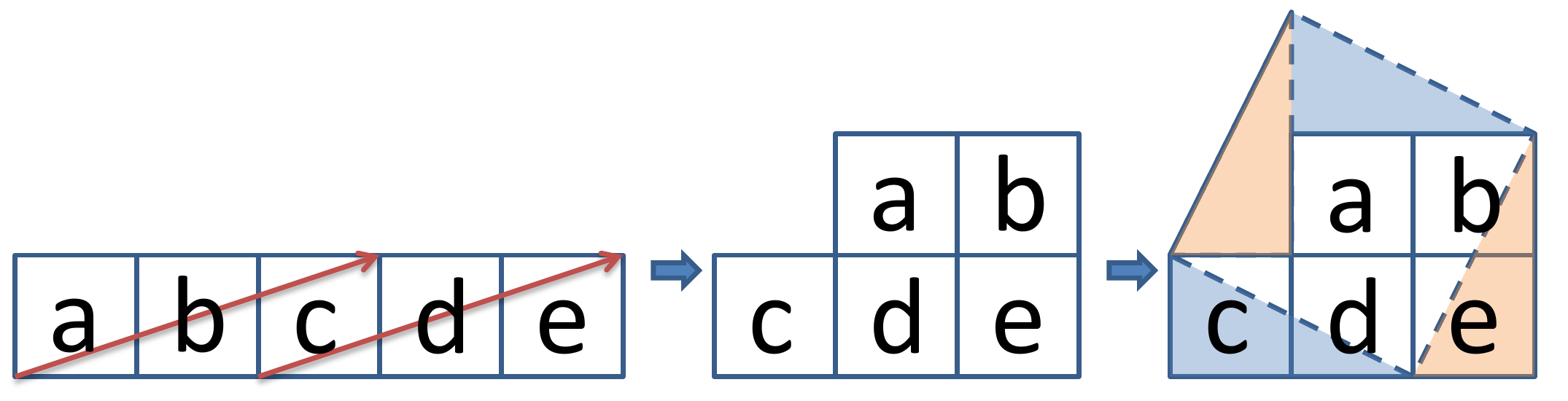} 
\caption{(Color online) Upper plot: visualization of a $[[90,2,9]]$ toric
(hyperbicycle) code in Eq.~(\ref{eq:Hyperbicycle}) formed by circulant
matrices $\mathcal{H}_{1}$ and $\mathcal{H}_{2}$ corresponding to
coefficients of the polynomial $h(x)=1+x$, $c=5$ and $\chi=3$. The
boundaries are periodic in the horizontal direction and shifted (as shown
by arrows) in the horizontal direction by $\chi=3$ blocks. Lower plot:
general block construction leading to rotated periodic boundaries
of hyperbicycle codes for $c=5$ blocks and for the shift $\chi=3$.
This corresponds to $t=1$ case of the infinite series of block constructions
with $c=t^{2}+(t+1)^{2}$ and $\chi=2t+1$, e.g., in case of toric
code we get $[[2n^{2}c,2,n\chi]]$ for any integer $n>1$.}
\label{fig:Transformation} 
\end{figure}

We start by considering the case $c=1$ and $\chi=1$ corresponding to
the hypergraph-product codes. The stabilizer generators for the
quantum code in Eq.~(\ref{eq:Till}) can be graphically represented by
two (dotted) lines living on different sublattices with the dots (red
and blue squares in Fig. \ref{fig:Visualization} marked by arrows)
placed in the positions corresponding to $1$s in rows of the binary
matrices $\mathcal{H}_{1}$, $\mathcal{H}_{2}$,
$\widetilde{\mathcal{H}}_{1}=\mathcal{H}_{1}^{T}$ and
$\widetilde{\mathcal{H}}_{2}=\mathcal{H}_{2}^{T}$. For cyclic codes,
e.g., in Fig. \ref{fig:Visualization}, the relative position of dots
stays the same and we can translate each stabilizer generator over the
corresponding sublattice.  In general, the form of stabilizer
generators is position dependent and the peculiar two-line structure
(see Fig. \ref{fig:Visualization}) ensures commutativity.  The logical
operators $\overline{X}_{j}$, $\overline{Z}_{j}$, $j=1,...,k$ can be
chosen among the rows of the matrices
$\overline{\mathcal{X}}_{1}=(\mathcal{H}_{2}^{T\perp}\otimes\widetilde{E}_{1},0)$,
$\overline{\mathcal{X}}_{2}=(0,\widetilde{E}_{2}\otimes\mathcal{H}_{1}^{T\perp})$
and
$\overline{\mathcal{Z}}_{1}=(E_{2}\otimes\mathcal{H}_{1}^{\perp},0)$,
$\overline{\mathcal{Z}}_{2}=(0,\mathcal{H}_{2}^{\perp}\otimes E_{1})$
where the index corresponds to the sublattice number on which the
logical operator lives, $\perp$ stands for the orthogonal space
$\mod2$ and matrices $\mathcal{H}_{1}^{\perp}$,
$\mathcal{H}_{2}^{\perp}$, $\mathcal{H}_{1}^{T}{}^{\perp}$ and
$\mathcal{H}_{2}^{T}{}^{\perp}$ are in a row echelon form. By row and
column permutations on matrices $\mathcal{H}_{1}$, $\mathcal{H}_{2}$
it is convenient to reduce matrices $\mathcal{H}_{1}^{\perp}$,
$\mathcal{H}_{2}^{\perp}$, $\mathcal{H}_{1}^{T}{}^{\perp}$ and
$\mathcal{H}_{2}^{T}{}^{\perp}$ to the form with an identity matrix on
the right. In such a case, the logical operators can be represented by
vertical and horizontal (dotted) lines that have only one non-zero
element in the region of the size $k_{1}\times\tilde{k}_{2}$ for the
first sublattice and of the size $\tilde{k}_{1}\times k_{2}$ for the
second sublattice (shaded region in Fig. \ref{fig:Visualization})
resulting in $k=k_{1}\tilde{k}_{2}+\tilde{k}_{1}k_{2}$ logical qubits.
Thus, for such a representation, each physical qubit in the region of
size $k_{1}\tilde{k}_{2}+\tilde{k}_{1}k_{2}$ (shaded region in Fig.
\ref{fig:Visualization}) overlaps with only one logical qubit and can
be used for encoding. Note that in general the two sublattices cannot
be drawn together as they will have different dimensions for
non-square matrices $\mathcal{H}_{1}$ and $\mathcal{H}_{2}$. In such a
case, the sublattices can be represented by two different rectangular
regions and the stabilizer generators have one line per sublattice.

The hyperbicycle construction in Eq.~(\ref{eq:Hyperbicycle}) for
arbitrary $c$ and $\chi$ has a block structure of several rectangular
regions stitched together with one of the periodic boundaries being
shifted by $\chi$ blocks (see Fig.\ \ref{fig:Transformation}). The
stabilizer generators can be graphically represented by two (dotted)
lines with the dots (red and blue squares in
Fig. \ref{fig:Visualization-1}) placed in the positions corresponding
to $1$s in rows of the binary matrices $\mathcal{H}_{1}$,
$\mathcal{H}_{2}$, $\widetilde{\mathcal{H}}_{1}$ and
$\widetilde{\mathcal{H}}_{2}$. For cyclic codes, e.g., in Fig.
\ref{fig:Visualization-1}, the relative position of dots stays the
same and we can translate the stabilizer generator with (shifted)
periodic boundaries. Just like for the hypergraph product codes, the
form of stabilizer generators is position dependent in case of
non-cyclic codes. In general, codes with $c>1$ have complicated
structure of logical operators. Nevertheless, in a specific case of
CSS codes when $c$ is odd and $k_{i}^{(1+x)}=k_{i}$ (see Theorem
\ref{th:tiled-size}), we can recover the form of logical operators
$\overline{X}_{j}$, $\overline{Z}_{j}$, $j=1,...,k$ obtained in $c=1$
case where the operators can be chosen among the rows of the matrices
$\overline{\mathcal{X}}_{1}=(\widetilde{\mathcal{H}}_{2}^{\perp}\otimes\widetilde{E}_{1},0)$,
$\overline{\mathcal{X}}_{2}=(0,\widetilde{E}_{2}\otimes\widetilde{\mathcal{H}}_{1}^{\perp})$
and
$\overline{\mathcal{Z}}_{1}=(E_{2}\otimes\mathcal{H}_{1}^{\perp},0)$,
$\overline{\mathcal{Z}}_{2}=(0,\mathcal{H}_{2}^{\perp}\otimes E_{1})$.
The only difference is that the logical operators are now repeated $c$
times which can lead to codes with increased distance (see Fig.
\ref{fig:MultiCode}).

\begin{example}A CSS $[[900,50,14]]$ hyperbicycle code is obtained
  with circulant $\mathcal{H}_{1}=\mathcal{H}_{1}^{0}$  corresponding to the
  polynomial $h(x)=(1+x+x^{3}+x^{5})$, $n_i=15$, $c=2$, $\chi=1$, and $b_i=a_i$. \end{example}

Note that one-to-one correspondence between a set of physical qubits
(shaded region in Figs. \ref{fig:Visualization} and \ref{fig:MultiCode})
and logical qubits can be used for encoding.
\begin{figure}[htbp]
\centering \includegraphics[width=0.48\columnwidth]{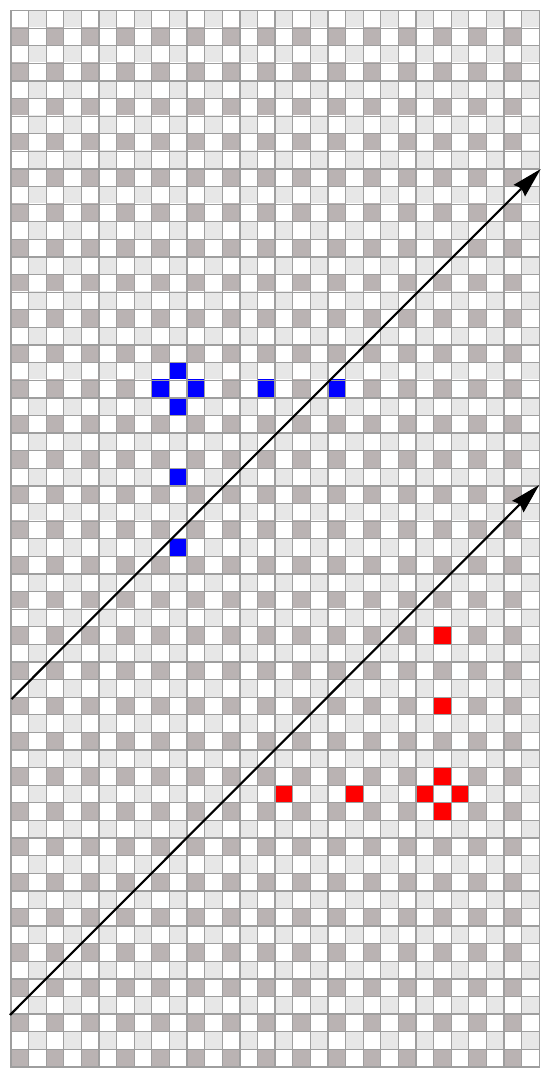}\hskip0.1in
\includegraphics[width=0.48\columnwidth]{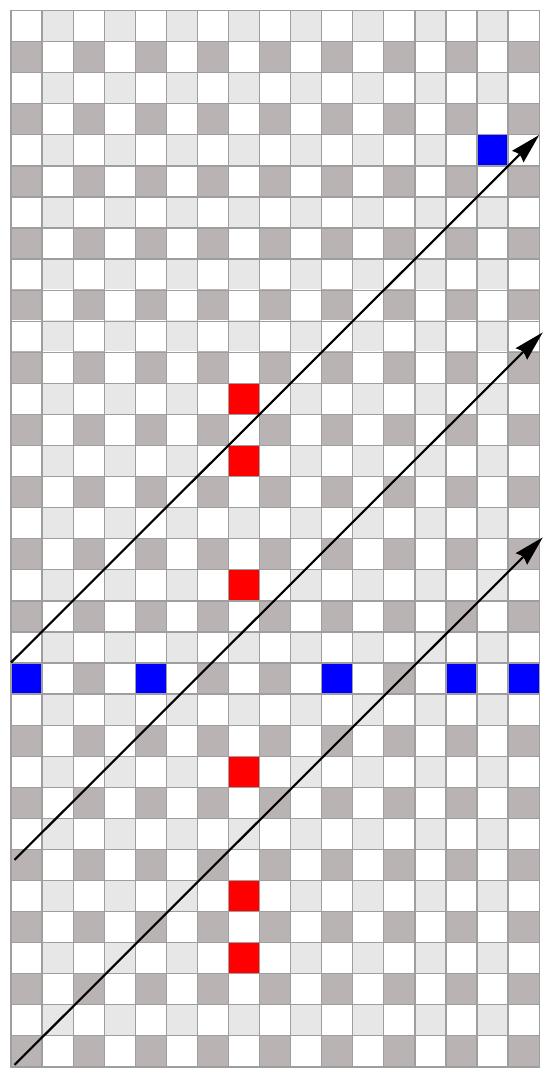}
\caption{(Color online) Same as Fig. \ref{fig:Visualization}. Left: $X$ and
$Z$ stabilizer generators for the CSS hyperbicycle code $[[900,50,14]]$
formed by circulant matrices corresponding to coefficients of a polynomial
$h(x)=1+x+x^{3}+x^{5}$ and $c=2$, $\chi=1$. Right: a single stabilizer
generator of a $[[289,81,5]]$ non-CSS hyperbicycle code in Eq.~(\ref{eq:two-sublattice})
formed by circulant matrices corresponding to coefficients of a polynomial
$h(x)=1+x+x^{3}+x^{6}+x^{8}+x^{9}$ and $c=1$. The division into
two sublattices is impossible and all other stabilizer generators
are obtained by shifts over the light and dark gray qubits with periodicity
in the vertical direction and shifted periodicity (shown by arrows)
in the horizontal direction. }

\label{fig:Visualization-1} 
\end{figure}

\subsection{Codes from two circulant matrices}

The hyperbicycle construction in Eq.~(\ref{eq:Hyperbicycle}) can
employ the known families of cyclic codes when matrices $\mathcal{H}_{1}^{0}$
and $\mathcal{H}_{2}^{0}$ (after additional permutations that change
the order in the Kronecker product) in Eq. (\ref{eq:check-tiled-1})
are circulant. Note that any circulant matrix will have the block form of
Eq. (\ref{eq:check-tiled-1}). As was mentioned in the previous
section, for circulant matrices $\mathcal{H}_{1}^{0}$ and $\mathcal{H}_{2}^{0}$
the stabilizer generators are translationally invariant with (shifted)
periodic boundaries.

The choice of $\chi\neq1$ can lead to codes with increased distance.
This can be best seen on the example with the toric code
(Fig. \ref{fig:Transformation}) where by rearranging the surface of
the code we can bring it into a new layout with proper periodic but
rotated boundaries. Then the Manhattan distance (defined on blocks,
e.g., $3$ in Fig.~\ref{fig:Transformation}) between the boundaries
will actually determine the distance of the code. The largest distance
can be expected for squares thus by defining the boundary angle as
$(t,t+h)$ we arrive at codes with $c=t^{2}+(t+h)^{2}$, $\chi=2t+h$ and
the Manhattan distance equal to $\chi$. Compared to the general
distance bound in Theorem \ref{th:css-lower-bound-generic} for toric
codes we achieve the distance: $D=\chi d/c$. As the following examples
confirm, numerically we see that $\chi>1$ can produce codes exceeding
the distance bound in Theorem \ref{th:css-lower-bound-generic}, often
saturating the upper distance bound in Theorem
\ref{th:upper-bound-partial}.

\begin{example}A CSS family of rotated toric codes is obtained when
$\mathcal{H}_{1}^{0}$ corresponds to the polynomial $h(x)=(1+x)$
(for $\mathcal{H}_{2}^{0}$ we use $b_{i}=a_{i}$), $c=t^{2}+(t+1)^{2}$,
$\chi=2t+1$, $t=1,2,\ldots$. By construction in Eq.~(\ref{eq:Hyperbicycle})
we obtain codes with parameters $[[2n^{2}c,2,n\chi]]$. Explicitly
for $n=2$ we obtain $[[40,2,6]]$, $[[104,2,10]]$\ldots{}, and
for $n=3$ $[[90,2,9]]$, $[[234,2,15]]$ \ldots{}. \end{example}

\begin{example}A $[[90,8,8]]$ CSS hyperbicycle code is obtained
when $\mathcal{H}_{1}^{0}$ corresponds to the classical cyclic code
$[15,4,8]$ with the generator polynomial $g(x)=(1+x^{3}+x^{4})$
(for $\mathcal{H}_{2}^{0}$ we use $b_{i}=a_{i}$), $c=5$ and $\chi=3$.
\end{example}

\begin{example}A $[[90,10,7]]$ CSS hyperbicycle code is obtained
when $\mathcal{H}_{1}^{0}$ corresponds to the classical cyclic code
$[15,5,7]$ with the check polynomial $h(x)=(1+x+x^{3}+x^{5})$
(for $\mathcal{H}_{2}^{0}$ we use $b_{i}=a_{i}$), $c=5$ and $\chi=3$.
\end{example}

\begin{example}
  A $[[126, 8, 10]]$ CSS hyperbicycle code is obtained when
  $\mathcal{H}_{1}^{0}$ corresponds to the classical cyclic code $[21,
    5, 10]$ with the check polynomial $h(x) = (1 + x +x^5)$ (for
  $\mathcal{H}_{1}^{0}$ we use $b_i = a_i$), $c = 7$ and $\chi = 3$.  Same
  construction  with $\chi=1$ results in the code $[[126,14,6]]$.
\end{example}

\begin{example}
  Same construction starting with  the classical cyclic code
$[30, 8, 8]$ with the check polynomial $h(x) = (1 + x^2 +x^8)$,
$c=10$ and $\chi=3$ gives a code $[[180, 16, 8]]$, while $\chi=1$
gives $[[180,16,6]]$ with a smaller distance.
\end{example}

\begin{example}
  Same construction starting with the classical cyclic code $[[30, 8,
      8]]$ corresponding to the check polynomial $h(x) = (1 + x^2
  +x^8)$ with $ c = 15$ and $\chi = 2$ gives a $[[120, 32, 4]]$ CSS
  hyperbicycle code; $\chi=1$ gives a code $[[120,32,2]]$.
\end{example}

Note that in many cases the code rate goes up compared to the hypergraph-product code
constructed from the same cyclic codes while the construction from the ``small'' cyclic codes
is not possible (cf. Example \ref{ex:Rate}).

\subsection{Non-CSS versions of hyperbicycle codes}

\label{sec:non-CSS-hb}

We observe that when $\mathcal{H}_{1}=\widetilde{\mathcal{H}}_{1}$ and
$\mathcal{H}_{2}=\widetilde{\mathcal{H}}_{2}$, the construction in
Eqs.~(\ref{eq:Hyperbicycle}) can be mapped to non-CSS codes in
Eq.~(\ref{eq:two-sublattice}) that in many cases have the same
distance but half the number of encoded and physical qubits. In
particular, this happens when $\chi=1$ and matrices $\mathcal{H}_{1}$
and $\mathcal{H}_{2}$ are symmetric. By non-CSS hyperbicycle codes we
then mean a result of the mapping in Theorem
\ref{th:two-sublattice-theorem} of the code in
Eq.~(\ref{eq:Hyperbicycle}). The dimensions of such codes can be
readily found by applying Theorem \ref{th:tiled-size} where
$s_{1}=s_{2}=0$.\begin{theorem}\label{th:Non-CSS} A quantum non-CSS
  code constructed from matrices~(\ref{eq:check-tiled-2}) such that
  $\mathcal{H}_{1}=\widetilde{\mathcal{H}}_{1}$ and
  $\mathcal{H}_{2}=\widetilde{\mathcal{H}}_{2}$ and the stabilizer
  generator matrix
\begin{equation}
G=(E_{b}\otimes\mathcal{H}_{1}|\mathcal{H}_{2}\otimes
E_{a}),\label{eq:two-sublattice-1} 
\end{equation}
 encodes $K=\sum_{l}k_{1}^{(p_{l})}k_{2}^{(p_{l})}/k_{0}^{(p_{l})}$
logical qubits into $N=cn_{1}n_{2}$ physical qubits, where $p_{l}(x)$
are all binary factors of $x^{c}-1$ such that $k_{0}^{(p_{l})}\neq0$,
including $x^{c}-1$ itself. The distance of such a code is bounded
by $D\geq\lfloor d/c\rfloor,\quad d\equiv\min(d_{1},d_{2})$ (same
notations as in Theorem \ref{th:css-lower-bound-generic}). \end{theorem}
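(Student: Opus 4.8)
The plan is to identify the non-CSS code of Theorem~\ref{th:Non-CSS} as the pre-image of the CSS hyperbicycle code~(\ref{eq:Hyperbicycle}) under the mapping of Theorem~\ref{th:two-sublattice-theorem}, read off its block length and dimension from those already established for~(\ref{eq:Hyperbicycle}), and then re-run the distance argument of Theorem~\ref{th:css-lower-bound-generic} directly on the single generator matrix~(\ref{eq:two-sublattice-1}).

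First I would observe that $\mathcal{H}_i=\widetilde{\mathcal{H}}_i$ forces the constituent matrices to be square, so $r_i=n_i$, $s_i=n_i-r_i=0$, and $\widetilde{E}_a=E_a$, $\widetilde{E}_b=E_b$. With these identifications the CSS generators~(\ref{eq:Hyperbicycle}) become $G_X=(A,B)$, $G_Z=(B,A)$ with $A=E_b\otimes\mathcal{H}_1$ and $B=\mathcal{H}_2\otimes E_a$ — precisely the form~(\ref{eq:two-sublattice-CSS}) produced by Theorem~\ref{th:two-sublattice-theorem} from the non-CSS matrix $G=(A\,|\,B)$, which is~(\ref{eq:two-sublattice-1}). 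Hence the reverse mapping of Theorem~\ref{th:two-sublattice-theorem} applies and halves both $N$ and $K$. By symmetry $\widetilde{d}_i=d_i$, so $\min(d_1,d_2,\widetilde{d}_1,\widetilde{d}_2)=\min(d_1,d_2)=d$; applying Theorem~\ref{th:tiled-size} with $s_i=0$ gives $K_{\mathrm{CSS}}=2\sum_l k_1^{(p_l)}k_2^{(p_l)}/k_0^{(p_l)}$, and~(\ref{eq:block-size}) gives $N_{\mathrm{CSS}}=c(r_1n_2+r_2n_1)=2cn_1n_2$; halving yields the claimed $K=\sum_l k_1^{(p_l)}k_2^{(p_l)}/k_0^{(p_l)}$ and $N=cn_1n_2$.

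For the distance I would not use the mapping (it only gives $D\ge D_{\mathrm{CSS}}/2\ge\lfloor d/c\rfloor/2$), but instead repeat the argument of Theorem~\ref{th:css-lower-bound-generic} on $G=(E_b\otimes\mathcal{H}_1\,|\,\mathcal{H}_2\otimes E_a)$, which has the same tiled block structure. Take $\mathbf{u}=(\mathbf{a}|\mathbf{b})$ with $(E_b\otimes\mathcal{H}_1)\mathbf{b}+(\mathcal{H}_2\otimes E_a)\mathbf{a}=0$ and $w\equiv\wgt\OR(\mathbf{a},\mathbf{b})<\lfloor d/c\rfloor$, and restrict each $a_i$ and $b_i$ to the columns entering this product. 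Every qubit in the support of $\mathbf{u}$ contributes at most $c$ columns to each of the reduced tiled matrices $\mathcal{H}'_1,\mathcal{H}'_2$, so each has at most $cw<c\lfloor d/c\rfloor\le d$ columns; since $d\le d_1$ and $d\le d_2$, neither $\mathcal{C}_{\mathcal{H}'_1}$ nor $\mathcal{C}_{\mathcal{H}'_2}$ has a nonzero codeword (such a codeword would extend by zeros to one of weight $<d_i$ in $\mathcal{C}_{\mathcal{H}_i}$). As in Consequence~\ref{th:zero-code} and the dimension count above, an empty pair of classical constituents forces the reduced quantum code to encode no qubits; hence the restricted $\mathbf{u}'$ is a combination of rows of the reduced generator matrix, and since the deleted columns vanish identically on the surviving rows, $\mathbf{u}$ itself is a combination of rows of $G$, i.e.\ a trivial logical operator. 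Therefore every nontrivial logical operator has weight at least $\lfloor d/c\rfloor$.

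The hard part is the bookkeeping in this last step, imported essentially unchanged from Theorem~\ref{th:css-lower-bound-generic}: one must check that the column restriction still yields a code of the two-sublattice tiled form, that ``empty classical constituents'' still implies ``encodes no qubits'' for the reduced (now generically rectangular) constituents, and — most delicately — that the surviving rows of the reduced generator are zero on the deleted columns, so that triviality of $\mathbf{u}'$ lifts back to triviality of $\mathbf{u}$. The only genuinely new feature relative to the CSS proof is that a single non-CSS generator matrix replaces the pair $G_X,G_Z$, with $\wgt\OR(\mathbf{a},\mathbf{b})$ playing the role of the sublattice weight.
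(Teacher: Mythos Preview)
Your proposal is correct and follows essentially the same approach as the paper. The paper's own proof is a single sentence: it invokes the restriction argument of Theorem~\ref{th:css-lower-bound-generic} together with the observation that any nontrivial codeword of the (CSS) hyperbicycle code must have weight at least $\lfloor d/c\rfloor$ on at least one of the two sublattices; since the non-CSS weight $\wgt\OR(\mathbf{a},\mathbf{b})\ge\max(\wgt\mathbf{a},\wgt\mathbf{b})$, the bound transfers. You do the same restriction argument directly on $G=(E_b\otimes\mathcal{H}_1\mid\mathcal{H}_2\otimes E_a)$ instead of passing through the CSS code, and you correctly note that the naive inequality $D\ge D_{\mathrm{CSS}}/2$ from Theorem~\ref{th:two-sublattice-theorem} would lose a factor of two --- which is exactly why the paper appeals to the \emph{proof} (per-sublattice) rather than the \emph{statement} of Theorem~\ref{th:css-lower-bound-generic}. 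The dimension and block-length parts are handled identically in both (apply Theorem~\ref{th:tiled-size} with $s_1=s_2=0$ and halve via Theorem~\ref{th:two-sublattice-theorem}).
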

\begin{proof}The distance bound follows from the proof of Theorem
\ref{th:css-lower-bound-generic} given the fact that any code word
of the original quantum code has to have support on at least one of
the sublattices with weight exceeding $\lfloor d/c\rfloor$.\end{proof}
\begin{theorem} \label{th:tiled3-1} Suppose $c$ is even, $a_{i}$
and $b_{i}$ in Eq.~(\ref{eq:check-tiled}) are such that $k_{i}^{(1+x)}=k_{i}$,
$r_{i}=n_{i}$ and binary codes with generator matrices $\sum a_{i}$
and $\sum b_{i}$ have distance at least $2$. Then quantum non-CSS
code with generators in Eq.~(\ref{eq:Hyperbicycle}) that have been
reduced by construction in Eq.~(\ref{eq:two-sublattice}) has parameters
$[[n_{1}n_{2}c,k_{1}k_{2},\geq(2/c)d]]$ where $d=\min(d_{1},d_{2})$.
\end{theorem}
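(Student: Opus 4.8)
The plan is to derive this statement as the non-CSS counterpart of Theorem~\ref{th:tiled3}, in the same way Theorem~\ref{th:Non-CSS} is the non-CSS counterpart of Theorem~\ref{th:css-lower-bound-generic}. As in Theorem~\ref{th:Non-CSS}, the statement is implicitly set in the regime where the reduction of Eq.~(\ref{eq:two-sublattice}) applies, i.e.\ $\mathcal{H}_i=\widetilde{\mathcal{H}}_i$ (which holds e.g.\ for $\chi=1$ and symmetric $\mathcal{H}_1,\mathcal{H}_2$, and then also forces $\sum_i a_i$ and $\sum_i b_i$ to be symmetric, so the full hypotheses of Theorem~\ref{th:tiled3} hold). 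The parameters are then immediate: by Theorem~\ref{th:Non-CSS} with $s_i=n_i-r_i=0$ and only the class $p(x)=1+x$ ($k_0^{(1+x)}=1$) contributing, $N=n_1n_2c$ and $K=k_1k_2$, and the underlying CSS hyperbicycle code of Theorem~\ref{th:tiled3} has $D_{\mathrm{CSS}}\ge(2/c)d$ with $d\equiv\min(d_1,d_2)$. Theorem~\ref{th:Non-CSS} already gives $D\ge\lfloor d/c\rfloor$; the real work is to improve this to $D\ge(2/c)d$.

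For the distance I would argue directly. Under the above conditions the matrices of Eq.~(\ref{eq:Hyperbicycle}) take the self-paired form $G_X=(A,B)$, $G_Z=(B,A)$ with $A=E_b\otimes\mathcal{H}_1$ and $B=\mathcal{H}_2\otimes E_a$, so the non-CSS code with generator $G=(A|B)$ of Eq.~(\ref{eq:two-sublattice-1}) is the image of the CSS hyperbicycle code of Theorem~\ref{th:tiled3} under the reverse of the mapping of Theorem~\ref{th:two-sublattice-theorem}. A logical operator $\mathbf{u}=(\mathbf{a}|\mathbf{b})$ of the non-CSS code pulls back to the logical operator $(\mathbf{a},\mathbf{b})$ of the CSS code, and by construction $\wgt\OR(\mathbf{a},\mathbf{b})\ge\max\{\wgt(\mathbf{a}),\wgt(\mathbf{b})\}$: the OR-weight dominates the weight of either CSS sublattice piece on its own. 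I would then reuse the internal estimate of the proof of Theorem~\ref{th:tiled3}: decompose $(\mathbf{a},\mathbf{b})$, as done there, into a fully-symmetric representative (to which Lemma~\ref{th:symmetric-weight-lower-bound} applies) plus block-structure degeneracy; the symmetric part is nonzero on at least one of the two CSS sublattices, since otherwise $\mathbf{u}$ would be pure degeneracy. On that sublattice the argument of Theorem~\ref{th:tiled3} (expanding the restriction over the $n_i$-factor and using the block structure together with the distance-$\ge2$ hypothesis on $\sum_i a_i$ and $\sum_i b_i$) bounds the weight of the full codeword there by $(2/c)d$. Hence $\max\{\wgt(\mathbf{a}),\wgt(\mathbf{b})\}\ge(2/c)d$, so $\wgt\OR(\mathbf{a},\mathbf{b})\ge(2/c)d$, which is the asserted bound.

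The step I expect to be the main obstacle is the clean transfer of this per-sublattice estimate from the doubled CSS lattice of Theorem~\ref{th:tiled3} to the single-sublattice restriction needed here: one has to check that the case analysis in that proof (even versus odd multiplicities among the correction vectors, together with distance $\ge2$ of $\sum_i a_i$ and $\sum_i b_i$) only ever uses the block structure of one sublattice and never the coupling between the two, which is exactly what makes it carry over verbatim. A secondary, purely formal point is that the symmetry-class decomposition is preserved by the reverse mapping of Theorem~\ref{th:two-sublattice-theorem}; this is immediate because that mapping is a bijection carrying stabilizers to stabilizers and logical operators to logical operators, so admitting a fully-symmetric representative modulo degeneracy is a property of the logical class itself.
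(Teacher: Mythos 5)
Your proposal is correct and follows essentially the same route as the paper: the parameters come from Theorem~\ref{th:Non-CSS} (equivalently Theorem~\ref{th:tiled-size} with $s_i=0$ and only the $p(x)=1+x$ class contributing), and the distance bound is obtained exactly as in the paper's proof, namely by reusing the per-sublattice weight estimate from the proofs of Theorems~\ref{th:tiled2-1} and \ref{th:tiled3} together with the observation that any logical operator of the non-CSS code must have support on at least one sublattice with weight at least $(2/c)d$.
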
 \begin{proof}This distance bound follows from the
proof of Theorem \ref{th:tiled2-1} given the fact that any code word
of the original quantum code has to have support on at least one of
the sublattices with weight exceeding $(2/c)d$.\end{proof} Finally,
we would like to mention that the upper distance bound in Theorem
\ref{th:upper-bound-partial} also applies to non-CSS hyperbicycle
codes since by construction this bound involves only one sublattice. 

For $\chi=1$ we can use \emph{palindromic} check polynomials $h(x)$,
i.e. $x^{\deg h(x)}h(1/x)=h(x)$, such that $cn-\deg h(x)$ is even,
in order to construct symmetric circulant matrices $\mathcal{H}_{i}$
from the polynomial $x^{[cn-\deg h(x)]/2}h(x)$. 

\begin{example}A $[[289,81,5]]$ non-CSS hyperbicycle code (see Fig.
\ref{fig:Visualization-1}) is obtained from Eqs.~(\ref{eq:two-sublattice})
and ~(\ref{eq:Hyperbicycle}) using circulant matrices $\mathcal{H}_{1}=\mathcal{H}_{2}$
corresponding to coefficients of a palindromic polynomial $h(x)=1+x+x^{3}+x^{6}+x^{8}+x^{9}$
where $c=1$ and $\chi=1$. \end{example}

\section{Conclusions}

We described a large family of hyperbicycle codes that includes as
subclasses the best of the known LDPC codes.  The construction allows
for explicit upper and lower bounds on the code distance.  We also
described new LDPC code families with finite rates and distances
scaling as a square root of block length. Our discussion is
accompanied with geometrical interpretations of the hyperbicycle codes
which can facilitate design and applications of such codes.  The
construction is particularly important for designing LDPC codes with
relatively small block lengths which is important since the original
hypergraph product codes have relatively poor parameters at small
block lengths.

Another advantage of hyperbicycle construction is that it can be based
on a pair of very well studied classical cyclic codes. This leads to
codes with good parameters up to limited but relatively large block
lengths (in general, cyclic codes with asymptotic rates below one have
poor asymptotic parameters).  The planar layout of thus constructed
quantum codes possess translational invariance of stabilizer
generators which may simplify the implementation (see e.g. Ref. \cite{De:arXiv2012}).

Although the quantum LDPC codes discussed in this work have been
shown to possess a finite noise threshold \cite{Kovalev:2012arXiv}, it
is yet to be seen whether there are good decoders for such codes. It
may well happen that the relation between hyperbicycle and bicycle
codes can lead to better decoding for the former as the latter are
known for their good decoding properties.

Even though the lower distance bounds presented in this paper are in
some cases inferior compared to the hypergraph-product codes, we do
not expect that this will have a significant effect on the value of the
noise threshold as the distance still scales as a square root of the
block length while the LDPC structure of the stabilizer generators is
preserved \cite{Kovalev:2012arXiv}.  Given that, we expect that one
can encode more qubits into hyperbicycle codes compared to hypergraph
product codes without affecting the threshold.

Our results notwithstanding, there are several open questions in regard to the hyperbicycle codes.
In particular, it would be interesting to establish conditions
under which the hyperbicycle codes reach the upper distance bound. Furthermore, the case when the block shift $\chi$ and the number of blocks $c$ are commensurate 
has not been analyzed.  It
would also be interesting to explore the exact relation between the
hyperbicycle codes and the CSS codes constructed over higher alphabets
\cite{Tillich:ISIT2012,Kasai:IEEE2012}.
\newline

\section*{ACKNOWLEDGEMENTS}

We are grateful to I. Dumer and M. Grassl for multiple helpful discussions.
This work was supported in part by the U.S. Army Research Office under
Grant No.\ W911NF-11-1-0027, and by the NSF under Grant No. 1018935.

\bibliography{more_qc,MyBIB,qc_all}

\end{document}